\def\bp{\begin{pmatrix}}
\def\ep{\end{pmatrix}}
\newtheorem{theorem}{Theorem}
\definecolor{red}{rgb}{1,0,0}
\long\def\comment#1{}
\title{\boldmath Irregular KZ equations and Kac-Moody representations}
\author[a]{Sergei Gukov, }
\author[b,c]{Babak Haghighat, }
\author[b]{Yihua Liu, }
\author[b,c,d]{Nicolai Reshetikhin}
\affiliation[a]{Richard N. Merkin Center for Pure and Applied Mathematics, California Institute of Technology, Pasadena, CA 91125, USA}
\affiliation[b]{Yau Mathematical Sciences Center, Tsinghua University, Beijing, 100084, China}
\affiliation[c]{Beijing Institute of Mathematical Sciences and Applications (BIMSA), Huairou District, Beijing 101408, China}
\affiliation[d]{St. Peresburg University, 7-9 Universitetskaya Embankment, St Petersburg, 199034, Russia}
\abstract{In this paper we construct irregular representations of the affine Kac-Moody algebra $\widehat{sl}(2,\mathbb{C})$. We show how such irregular representations correspond to irregular Gaiotto-Teschner representations of the Virasoro algebra. The intertwiners for such representations satisfy a version of Knizhnik-Zamolodchikov (KZ) equations which we call irregular KZ equations. By connecting to 2d Liouville theory, we show how the conformal blocks governed by our irregular KZ equation correspond to 4d Argyres-Douglas theories with surface operator insertions. The corresponding flat connections describe braiding between such operators on the Gaiotto curve.}
\begin{document}

\maketitle

\section{Introduction}

Correlation functions of two-dimensional conformal field theories (CFTs) are well-understood and widely used when the corresponding operators are primary fields in highest weight representations. The corresponding representations can be those of the Virasoro algebra, W-algebras or Kac-Moody algebras in the case of WZW models. In many cases, one can utilize symmetries to derive differential equations satisfied by conformal blocks with operators in such regular representations. These equations can be Belavin-Polyakov-Zamolodchikov (BPZ) equations \cite{Belavin:1984vu} in the presence of degenerate fields or Knizhnik-Zamolodchikov (KZ) equations \cite{KNIZHNIK198483} in the presence of affine Kac-Moody algebras. KZ equations have moreover the remarkable property that they define a flat connection on the conformal block bundle. This allows one to construct 3d knot invariants by using these flat connections to transport operators around each other adiabatically along a third direction \cite{Witten:1988hf}. 

Operators in so-called irregular representations of 2d CFTs are less studied but have recently gained more importance in the context of the AGT or BPS/CFT correspondence \cite{Alday:2009aq,Gaiotto:2009ma,Dijkgraaf:2009pc,Gaiotto:2011nm,Nagoya,Gaiotto:2012sf,Nekrasov:2015wsu}. Before worrying how to define such representations, one can see their effect in the differential equations satisfied by the corresponding conformal blocks. Namely, the resulting differential equations must have irregular singular points and the corresponding KZ equations were initially introduced in \cite{Resh-KZ}. On the 4d gauge theory side of the AGT correspondence, these singularities map to higher order poles of the Higgs field associated to the Seiberg-Witten geometry of the supersymmetric gauge theory \cite{Gaiotto:2009hg}. Thus, one obtains flat connections with irregular singularities. Such theories are known as Argyres-Douglas theories \cite{Argyres:1995jj} and have been studied extensively using Painlave equations in the past \cite{Bonelli:2016qwg}. This approach uses the theory of isomonodromic deformations of systems of linear ODEs or flat connections in our terminology. We also refer to the related works \cite{Jimbo_2008,Awata:2010bz,Nagoya_2010,Nishinaka:2012kn,Rim:2012tf,Gaiotto:2013rk,Choi:2014qha,Nagoya:2015cja,Ashok:2016yxz,Bonelli:2022ten,Felder:2020ugk} which consider integral representations and differential equations in the presence of irregular operators from different perspectives. \cite{Awata:2009ur,Awata:2010bz} performed initial studied of BPZ type differential equations satisfied by irregular conformal blocks. These results were later used in \cite{Ashok:2016yxz,Bonelli:2022ten} to derive monodromy transformations and corresponding Stokes matrices. \cite{Nishinaka:2012kn,Rim:2012tf,Choi:2014qha,Nagoya:2015cja,Frenkel:2015rda,Nekrasov:2017gzb} consider integral representations of Liouville conformal blocks with degenerate and irregular operators, but they do not connect to current algebras and WZW models. On the other hand, \cite{Jimbo_2008,Nagoya_2010,Felder:2020ugk} do consider KZ equations and integral representations of $Sl(2)$ WZW conformal blocks with irregular operators, but don't connect these to Liouville conformal blocks. This connection is made later on in \cite{Gaiotto:2013rk} based on the general relation found in \cite{Ribault:2005wp}. Results from a more mathematical perspective can be found in \cite{JIMBO1981306,Feigin_2010a,Feigin_2010b,XuQG}.  However, the representation theoretic relation between irregular Virasoro conformal blocks and their Kac-Moody counterparts remains elusive.\footnote{In \cite{Ribault:2005wp,Gaiotto:2013rk} the relationship is given in terms of a complicated variable change but a concrete relationship between the two integral representations is lacking.}

The goal of the present paper is to remedy this gap by starting a program to derive irregular KZ equations from irregular Virasoro conformal blocks using a representation theoretic framework. This will naturally give rise to irregular flat connections for conformal block bundles. A first step in this direction was made in \cite{Haghighat:2023vzu}. In the current paper, our starting point are Kac-Moody representations where we will be interested in the affine $Sl(2,\mathbb{C})$ case. This will be relevant for the Liouville CFT while higher rank current algebras will be relevant for various Toda CFTs. The strategy employed is to use a bosonic Fock space representation for the modes of the algebra such that positive modes up to a certain degree do not annihilate the vacuum. This leads to induced modules which are irregular. We then show via the Sugawara construction that, under certain restrictions, such irregular modules give rise to Gaiotto-Teschner type representations \cite{Gaiotto:2012sf} of the Virasoro algebra relevant for Argyres-Douglas theories \cite{Argyres:1995jj}. Using intertwining operators, we then derive KZ equations for conformal blocks with an irregular vacuum at infinity and n highest weight representations at specified points on $\mathbb{P}^1$. Within the framework of Liouville theory, such regular representations correspond to the insertion of degenerate fields and become surface operators in the dual 4d gauge theory \cite{Alday:2009fs}. Finally, we derive integral solutions of our irregular KZ equations in cases where the irregular singularity is restricted to be of degree one. We show how these integral solutions can be derived from a master function by taking suitable derivatives. 

\subsection{Summary of results}
The main results of the paper are as follows. In Section \ref{sec:irrkcm}, we define irregular Kac-Moody representations by modifying the action of the modes. The main result here is that for both $gl(1)$ and $sl(2)$ we can show that Gaiotto-Teschner representations \cite{Gaiotto:2012sf} of the Virasoro algebra can be obtained via the Sugawara construction. What is more, we can obtain generalizations by constructing so-called \textit{fully irregular} $sl(2)$ modules of degree $m$, depending on parameters $\{x_1,\ldots,x_m,y_1,\ldots,y_m,z_1,\ldots ,z_m\}$. The Sugawara construction then gives the following action of Virasoro generators on such states:
\begin{align}
    \hat{L}_0 &=L_0+\frac{k}{k+2}\sum_{i=1}^m i(\frac{\partial}{\partial y_i}y_i-\frac{\partial}{\partial z_i}z_i)),\\
    \hat L_i &= L_i-\frac{k}{k+2}(\sum_{j=1}^i jz_jy_{i-j}+z_0\sum_{j=1}^i z_j+\sum_{j=1}^{m-i}(j\frac{\partial}{\partial y_j}y_{i+j}-(i+j)\frac{\partial}{\partial z_j}z_{i+j})), \quad \text{for $i$=1,2,...m}\\
    \hat L_i &= L_i-\frac{k}{k+2}(\sum_{j=i-m}^{m} jz_{j}y_{i-j}), \quad \text{for $i=m+1,...2m$}\\  L_n &=0,\quad \text{for $n>2m$}.
\end{align}
In the above, the $L_i$ denote the original irregular Virasoro representation constructed in \cite{Gaiotto:2012sf} where the parameters $c_l$ with $l=1, \ldots, m$, appearing in \cite{Gaiotto:2012sf}, are identified with our parameters $x_l$ (see section \ref{sec:irrkcm} for further details).

In Section \ref{sec:KZeq} and the rest of the paper, we focus on irregular Kac-Moody representations with $y_i=z_i=0$ which give rise to the Gaiotto-Teschner state constructed in \cite{Gaiotto:2012sf,Nagoya}. By utilizing the correspondence between Liouville conformal blocks and $SL(2,\mathbb{C})$ WZW blocks (see \cite{Cordova:2016cmu} for a modern viewpoint), we show how irregular versions of KZ equations at level $\kappa = - b^2$ can be obtained by placing such an irregular state at infinity. For degree 1 irregular singularities at infinity we arrive at the following modified KZ equation
\begin{equation} \label{eq:introKZ}
      -b^2\frac{\partial}{\partial z_i} \psi = \frac\Lambda2 H_i \psi + \left(\sum_{j \neq i} \frac{\Omega_{ij}}{z_i-z_j}\right)\psi,
\end{equation}
for the $sl(2,\mathbb{C})$ case, where $H_i$ is the Cartan generator acting on the $i$th regular representation. Performing the variable change $z_i \mapsto \frac{1}{z_i}$, one can see that the above equation develops a double pole when $z_i \rightarrow \infty$ and is hence irregular of degree one there. Irregular KZ equations of the form \eqref{eq:introKZ} and certain generalizations therefore were studied in \cite{Jimbo_2008}. What makes our approach different is the somewhat different derivation and the connection to Liouville theory as established in Section \ref{sec:heights} and Theorem \ref{th:height2}. 

We then proceed to show that integral solutions of this equation can be obtained via Liouville conformal blocks with $N$ degenerate field insertions and one degree-1 irregular operator at infinity. Concretely, our main result is the following. Define $N$ point degree-1 irregular Liouville CFT conformal block with $m$ screening operators, with $k_i \in \mathbb{Z}$, $b \in \mathbb{C}$, $\Lambda \in \mathbb{R}$, as follows:
    \begin{align}
        \mathcal{F}(z)=\exp\left(-\frac{\Lambda}{2b^2}\sum_{i=1}^N {k_i z_i}\right)\prod_{i<j}(z_i-z_j)^{-\frac{k_ik_j}{2b^2}}\int A({\bold{z,w}}) d^m w=\phi_0(z) \phi(\mathbf{z}) ,
    \end{align}
    where we define
    \begin{align}
        \phi(\mathbf{z}) \equiv \int A({\bold{z,w}})= \int \exp\left(\frac{\Lambda}{b^2}\sum_i w_i\right)\prod_{i<j}(w_i-w_j)^{-\frac{2}{b^2}}\prod_{i,j}(w_i-z_j)^\frac{k_j}{b^2}.
    \end{align}
    For integer $\bold{m}=(m_1,m_2,...,m_N)$, 
  $m_i\geq 0$, $\sum_i m_i=m$, let
    \begin{align}
       \phi^{(\bold{m})}= \int A({\bold{z,w}}) d^m w \left(\prod_{i=1}^{m_1}\frac{1}{w_{i_1}-z_1}...\prod_{i=1}^{m_N}\frac{1}{w_{i_N}-z_N}\right).
    \end{align}
The $\phi^{(\bold{m})}$ can be obtained via the action of $m$th order differential operators on the master function $\phi(\mathbf{z})$ which we explicitly show for the cases $m=1$ and $m=2$ (see Theorem \ref{th:height2}), and conjecture to hold for arbitrary $m$.
Then, the column vector $\widehat\psi$ containing all $\phi^{(\bold{m})}$ will satisfy the following KZ equation (Theorem \ref{th:heightm})\footnote{The KZ-equations \eqref{eq:introKZ} and \eqref{eq:TmKZ} are equivalent upon making the identification $\psi = \phi_0 \widehat\psi$.}
\begin{align}\label{eq:TmKZ}
   -b^2 \partial_i  \widehat\psi=A_i\widehat\psi+\sum_{j\neq i}\frac{\widehat{\Omega}^{T}_{ij}}{z_i-z_j}\widehat\psi,
\end{align}
where $\widehat{\Omega}^{T}$ is the transpose of the $\widehat{\Omega}$ matrix of height  $m$ KZ equations, see Section \ref{sec:heights} for the definition and concrete matrix representations. And the action of $A_i$ is 
\begin{align}
   A_i \phi^{(\bold{m})}=\Lambda m_i  \phi^{(\bold{m})}.
\end{align}

\section{Irregular Kac-Moody representations}
\label{sec:irrkcm}

In this section, we will derive irregular representations of the affine Kac-Moody algebra. We first start with the $\widehat{gl(1)}$ case. Using the Sugawara construction to construct the stress-energy tensor, we find that its action on the irregular affine $\widehat{gl(1)}$ state reproduces an irregular Virasoro state as defined in \cite{Gaiotto:2012sf}. We then generalize the above discussion to $\widehat{sl(2)}$ by using the Wakimoto construction, and show that an irregular Virasoro state can again be reproduced.

\subsection{The $\widehat{gl(1)}$ case}
In this section we want to construct irregular affine $\widehat{gl(1)}$ representations. The strategy will be to use a Fock space representation of the Heisenberg algebra $\hat{\alpha}$ of bosonic modes $\alpha_n$, $n \in \mathbb{Z}$, with commutation relations
\begin{equation}
    [\alpha_n, \alpha_m] = n \delta_{n+m}.
\end{equation}
We can use these modes to form the current
\begin{equation}
    \alpha(z) \equiv \sum_{n \in \mathbb{Z}} \alpha_n z^{-n-1}.
\end{equation}
Using the Sugawara construction of the stress energy tensor, we will see how the irregular Kac-Moody states we construct can be identified with irregular Gaiotto-Teschner states \cite{Gaiotto:2012sf} of the Virasoro algebra.

\paragraph{The irregular degree $r=1$ construction.}
To begin with, we recall the construction of the regular affine $\widehat{gl(1)}$ highest weight representations. Here, the Fock space is given by the following polynomial ring,
\begin{align}
    \mathcal{H}=\textit{Pol}(x_1,x_2,....).
\end{align}
The action of the generators $\left\{ \alpha_i , i\in\mathbb{  Z} \right\}  $ of  the affine Lie algebra $\widehat{gl(1)}$ is then given by
\begin{align}
    &\alpha_0\mathcal{H}=\lambda \mathcal{H} ,\\
    &\alpha_{n}\mathcal{H}=n\frac{\partial}{\partial x_n}\mathcal{H},\textit{for\space} n>0,\\
    &\alpha_{-n}\mathcal{H}=x_n\mathcal{H},\textit{for\space} n>0.
\end{align}
The vacuum or highest weight state is given by $\Omega=\mathbb{C}.1$. This realizes the usual Fock module $\mathcal{F}=\textit{Ind}_{\hat{\alpha}_{\leq 0}}^{\hat{\alpha}}\Omega$, where $\alpha_0 \Omega=\lambda \Omega, \alpha_{n} \Omega=0 $ for $n>0$.

Let us now turn to the irregular Fock space representation of $\hat{\alpha}$. To this end, we modify the action of the modes $\alpha_n$ as follows:
\begin{align}
     &\alpha_0\mathcal{H}=\lambda \mathcal{H} ,\\
    &\alpha_{n}\mathcal{H}=n\frac{\partial}{\partial x_n}\mathcal{H},\textit{for\space} n>1, \\
&\alpha_{-n}\mathcal{H}=x_n\mathcal{H},\textit{for\space} n>1,\\
&\alpha_{-1}\mathcal{H}=\frac{\partial}{\partial x_1}\mathcal{H},\\
&\alpha_{1}\mathcal{H}=-x_1\mathcal{H}.
\end{align}
One can check that this also forms a representation of $\hat{\alpha}$. Now in this case, we can construct the corresponding stress-energy tensor using the Sugawara construction. To proceed, we define $\alpha(z) = \sum_{n\in \mathbb{Z}}\alpha_n z^{-n-1}=\alpha_{-}(z)+\alpha_{+}(z)$, where
\begin{align}
   &\alpha_{-}(z)=\sum_{n<0}\alpha_n z^{-n-1}+\alpha_0 z^{-1},\\
   &\alpha_{+}(z)=\sum_{n>0}\alpha_n z^{-n-1}.
\end{align}

Using the following commutator,
\begin{align}
   \left[\alpha_{+}(z),\alpha_{-}(w)\right]&=\left[\sum_{n>0}\alpha_n z^{-n-1},\sum_{m<0}\alpha_m w^{-m-1}\right] \nonumber\\
   &=\sum_{n>0,m<-0}n\delta_{n+m} z^{-n-1} w^{-m-1}\nonumber\\
   &=\sum_{n>0}n z^{-n-1} w^{n-1}~,
\end{align}
we see that
\begin{align}\label{eq:free}
   \left[\alpha_{+}(z),\alpha_{-}(w)\right] &=\frac{1}{zw}\left.\frac{d}{dx}\left(\sum_{n\geq0}x^{n+1}-\sum_{n\geq 0}x^n\right)\right|_{x=\frac{w}{z}}\nonumber\\
   &=\frac{1}{zw}\left.\frac{x}{(1-x)^2}\right|_{x=\frac{w}{z}}\nonumber\\
   &=\frac{1}{(z-w)^2},
\end{align}
and
\begin{align}
    &\left[\alpha_{+}(z),\alpha_{+}(w)\right]=0 ,\\
    &\left[\alpha_{-}(z),\alpha_{-}(w)\right]=0.
\end{align}
Thus the corresponding 2-point function $\braket{\alpha(z)\alpha(w)}$ in our case is just \eqref{eq:free}, and the constraint $\left|z\right|>\left|w\right|$ is just the radial quantization condition. Next, we define the stress-energy tensor as follows
\begin{align} \label{eq:Tgl1}
T(z)=\frac{1}{2}:\alpha(z)\alpha(z):+\theta \partial \alpha(z),
\end{align} 
where the term proportional to $\theta$ indicates a curvature insertion at infinity.
In order to check that the modes of the stress-energy tensor satisfy the Virasoro algebra, it is enough to check the operator product expansion of $T$ with itself:
\begin{align}\label{eq:Valgebra}
    T(z)T(w)=\frac{\frac{c}{2}}{(z-w)^4}+\frac{2 T(w)}{(z-w)^2}+\frac{1}{z-w}\partial T(w).
\end{align} 
Using Wick's theorem, we get 
\begin{align}
    \mathcal{R}(T(z)T(w))&=\frac{1}{4} (2\braket{\alpha(z)\alpha(w)}^2+4\braket{\alpha(z)\alpha(w)}\alpha(z)\alpha(w))+\frac{\theta}{2}2\braket{\partial\alpha(z)\alpha(w)}\alpha(w)\nonumber\\
    &+\frac{\theta}{2}2\braket{\alpha(z)\partial\alpha(w)}\alpha(z)+\theta^2\braket{\partial\alpha(z)\partial\alpha(w)}...\nonumber\\
    &=\frac{1}{2}\frac{1}{(z-w)^4}+\frac{1}{(z-w)^2}\alpha(z)\alpha(w)+\frac{2\theta}{(z-w)^3}(\alpha(z)-\alpha(w))\nonumber\\
    &-\frac{6\theta^2}{(z-w)^4}\nonumber\\
    &=\frac{1}{2}\frac{1-12\theta^2}{(z-w)^4}+\frac{1}{(z-w)^2}(\alpha(w)\alpha(w)+2\theta\partial\alpha(w))\nonumber\\
    &+\frac{1}{(z-w)}(\partial(\frac{1}{2}\alpha(w)\alpha(w))+\partial(\theta \partial \alpha(w))+...\nonumber\\
    &=\frac{1}{2}\frac{c}{(z-w)^4}+\frac{2 T(w)}{(z-w)^2}+\frac{1}{z-w}\partial T(w),
\end{align}
which from \eqref{eq:Valgebra} fixes the central charge to the value $c=1-12\theta^2$. For the modes of the stress-energy tensor one obtains
\begin{align}
    L_n&= \frac{1}{2}\sum_{n\in \mathbb{Z}} :\alpha_m \alpha_{n-m}:+\theta(-n-1)\alpha_n ,\quad \text{for $n\neq0$} \\
    L_0&=\sum_{m\in \mathbb{N^{+}}}\alpha_{-m}\alpha_{m}+\frac{1}{2}\alpha_0^2-\theta\alpha_0~.
\end{align}
Next, we need to choose our irregular vacuum $\Omega$, represented in terms of a specific function $f(x_1)$. Choosing $f=\frac{1}{\sqrt{x_1}}$ ensures that $-\frac{\partial}{\partial x_1}(x_1 f)=x_1\frac{\partial}{\partial x_1}$, which in turn implies that the action of Virasoro generators reads as
\begin{align}
    L_0\Omega&=(x_1\frac{\partial}{\partial x_1}+\frac{\lambda^2}{2}-\theta \lambda)f(x_1),\\
    L_1\Omega&=-x_1(\lambda -2\theta)f(x_1),\\
    L_2\Omega&=\frac{1}{2}x_1^2 f(x_1).\\
    L_n\Omega&=0,\text{for $n>2$}.
\end{align}
The above reproduces the definition of irregular Gaiotto-Teschner states of the Virasoro algebra \cite{Gaiotto:2012sf}. To see this, perform the following substitutions,
\begin{align*}
    x_1&\rightarrow \sqrt{2}\sqrt{-1}x_1, \\
    \theta&\rightarrow-\frac{\sqrt{2}}{2}\sqrt{-1}\theta,\\
    \lambda&\rightarrow- \sqrt{2}\sqrt{-1}\lambda,
\end{align*}
giving
\begin{align}
    L_0\Omega&=(x_1\frac{\partial}{\partial x_1}+\lambda(\theta-\lambda))f(x_1),\\
    L_1\Omega&=-2x_1(\lambda -\theta)f(x_1),\\
    L_2\Omega&=-x_1^2 f(x_1).
\end{align}
If we identify $\theta=Q,\lambda=\alpha,x_1=c$, then the above will exactly match with the degree 1 irregular state as defined in \cite{Gaiotto:2012sf}.  

\paragraph{Generalization to higher degree $r > 1$.}
Furthermore, if we want to generalize this procedure to recover the degree 2 or higher irregular singular state, we just need to consider the following isomorphic but different representation of the Affine $gl(1)$ algebra. Take degree 2 for example, we may consider the following:
\begin{align}
    &\alpha_0\mathcal{H}=\lambda \mathcal{H} ,\\
    &\alpha_{-1}\mathcal{H}=\frac{\partial}{\partial x_1}\mathcal{H},\\
    &\alpha_{1}\mathcal{H}=-x_1\mathcal{H},\\
    &\alpha_{-2}\mathcal{H}=2\frac{\partial}{\partial x_2}\mathcal{H},\\
    &\alpha_{2}\mathcal{H}=-x_2\mathcal{H},\\
&\alpha_{n}\mathcal{H}=n \frac{\partial}{\partial x_n}\mathcal{H},\textit{for\space} n>2, \\
    &\alpha_{-n}\mathcal{H}=x_n\mathcal{H},\textit{for\space} n>2.
\end{align}
In this case, the form of the stress-energy tensor is kept invariant since the above definition doesn't change the structure of $gl(1)$ affine Kac-Moody algebra. But now the vacuum state $\Omega$ should be modified as 
\begin{align}
    \Omega=h=f_1(x_1)f_2(x_2).
\end{align}
And now the exact action of the Virasoro modes reads:
\begin{align}
     L_0\Omega&=-2\frac{\partial}{\partial x_2}(x
_2 h)-\frac{\partial}{\partial x_1}(x
_1 h)+(\frac{\lambda^2}{2}-\theta \lambda)h,\\
    L_1\Omega&=x_1(\lambda -2\theta)h-\frac{\partial}{\partial x_1 }(x_2 h),\\
    L_2\Omega&=(\frac{1}{2}x_1^2+x_2\lambda-3\theta\lambda)h ,\\
    L_3\Omega&=x_2x_1 h,\\
    L_4\Omega&=\frac{1}{2}x_2^2 h,\\ 
    L_n\Omega&=0,\text{for $n>4$}.
\end{align}
Performing the substitutions:
\begin{align*}
    x_1&\rightarrow \sqrt{2}\sqrt{-1}x_1, \\
    x_2&\rightarrow \sqrt{2}\sqrt{-1}x_2, \\
    \theta&\rightarrow-\frac{\sqrt{2}}{2}\sqrt{-1}\theta,\\
    \lambda&\rightarrow -\sqrt{2}\sqrt{-1}\lambda,
\end{align*}
the above becomes:
\begin{align}
      L_0\Omega&=-2\frac{\partial}{\partial x_2}(x
_2 h)-\frac{\partial}{\partial x_1}(x
_1 h)+\lambda(\theta-\lambda)h,\\
    L_1\Omega&=-2x_1(\lambda -\theta)h-x_2\frac{\partial}{\partial x_1 }h,\\
    L_2\Omega&=(-x_1^2-x_2(2\lambda-3\theta))h ,\\
    L_3\Omega&=-2x_2x_1 h,\\
    L_4\Omega&=-x_2^2 h,\\ 
    L_n\Omega&=0,\text{for $n>4$}.
\end{align}
We also need to specify to the state $h=\sqrt{x_1 x_2}$ giving rise to the following action
\begin{align}
    -\frac{\partial}{\partial x_2}(x
_2 h)=x_2 \frac{\partial}{\partial x_2}h,-\frac{\partial}{\partial x_1}(x
_1 h)=x_1 \frac{\partial}{\partial x_1}h.
\end{align}
Finally, the action of the Virasoro generators becomes:
\begin{align}
      L_0\Omega&=(2x_2\frac{\partial}{\partial x_2}+x_1\frac{\partial}{\partial x_1}+\lambda(\theta-\lambda))h,\\
    L_1\Omega&=-2x_1(\lambda -\theta)h-x_2\frac{\partial}{\partial x_1 }h,\\
    L_2\Omega&=(-x_1^2-x_2(2\lambda-3\theta))h ,\\
    L_3\Omega&=-2x_2x_1 h,\\
    L_4\Omega&=-x_2^2 h,\\ 
    L_n\Omega&=0,\text{for $n>4$}.
\end{align}
One can see that this is the degree 2 irregular state as defined in \cite{Gaiotto:2012sf}. One can also check that the commutation relation
\begin{align}
    \left[L_2,L_1\right]=L_3
\end{align}
is satisfied. So indeed we obtain a faithful representation of the Virasoro algebra.

It's easy to construct the degree $r$, or the rank $r$ (using the wording in \cite{Gaiotto:2012sf}), irregular state by employing the following definition:
\begin{align}
    &\alpha_0\mathcal{H}=\lambda \mathcal{H} ,\\
    &\alpha_{-i}\mathcal{H}=i\frac{\partial}{\partial x_i}\mathcal{H},\textit{for\space} r\geq i\geq 1\\
    &\alpha_{i}\mathcal{H}=-x_i\mathcal{H},\textit{for\space} r\geq i\geq 1\\
&\alpha_{n}\mathcal{H}=n \frac{\partial}{\partial x_n}\mathcal{H},\textit{for\space} n>r, \\
    &\alpha_{-n}\mathcal{H}=x_n\mathcal{H},\textit{for\space} n>r,
\end{align}
 setting the vacuum state to $\Omega_r=\sqrt{\prod_{i=1}^r x_i}$, and performing the substitutions:
     \begin{align*}
    x_i&\rightarrow \sqrt{2}\sqrt{-1}x_i,\text{for $r\geq i\geq 1$}, \\
    \theta&\rightarrow-\frac{\sqrt{2}}{2}\sqrt{-1}\theta,\\
    \lambda&\rightarrow -\sqrt{2}\sqrt{-1}\lambda.
\end{align*}
After these steps, one arrives at the following action of Virasoro modes:
\begin{align}
    L_0\Omega_r&=(\lambda(\theta-\lambda)+\sum_{m=1}^r mx_m\frac{\partial}{\partial x_m})\Omega_r,\\
    L_{k}\Omega_r&=-x_{k}(2\lambda-(k+1)\theta)-\sum_{m+l=k} x_m x_{l}-\sum_{j=1}^{r-k}j x_{k+j} \frac{\partial}{\partial x_j},\text{for $r-1\geq k\geq 1$},\\
    L_k\Omega_r&=-\sum_{m+l=k} x_m x_{l}, \text{for $k\in\left\{ r+1,...2r\right\}$},\\
    L_{k}\Omega_r&=0,\text{for $k>2r$.}
\end{align}
This is nothing but the irregular state of degree $r$ defined in \cite{Gaiotto:2012sf}.
\subsection{The $\widehat{sl(2)}$ case}

Now the next step is to generalize the above construction to the $\widehat{sl(2)}$ case. Here we should consider the Wakimoto construction, or the ``bosonization" procedure, namely the 2 copies of the Heisenberg algebra. And the generators of $sl(2)$ are constructed from two copies of the Heisenberg algebra, with the following definition:
\begin{align} 
    e(z)&=\beta(z),\nonumber\\
    h(z)&=-2:\gamma(z)\beta(z):+\sqrt{2\kappa}\,\alpha(z),\nonumber\\
    f(z)&=-:\gamma^2(z) \beta(z):+ \sqrt{2\kappa}\,\alpha(z)\gamma(z)+k\frac{d\gamma(z)}{dz}. \label{eq:sl2currents}
\end{align}
Here $\kappa=k+2$, and $k$ is the level of the Kac-Moody algebra. Furthermore, $e,f,h,\beta,\gamma,\alpha$ can be written in terms of series expansions of modes as follows
\begin{align}
    &e(z)=\sum_{n\in \mathbb{Z}}e_n z^{-n-1},~f(z)=\sum_{n\in \mathbb{Z}}f_n z^{-n-1},~h(z)=\sum_{n\in \mathbb{Z}}h_n z^{-n-1},\\
    &\beta(z)=\sum_{n\in \mathbb{Z}}\beta_n z^{-n-1},~\alpha(z)=\sum_{n\in \mathbb{Z}}\alpha_n z^{-n-1},\\
    &\gamma(z)=\sum_{n\in \mathbb{Z}}\gamma_n z^{-n}.
\end{align} 
$\beta,\gamma,\alpha$ are generators of Heisenberg algebras with respect to the following commutation relations:
\begin{align}
    &\left[\beta_n,\gamma_m \right]
    =\delta_{n+m},\\
    &\left[\alpha_n,\alpha_m \right]    =n\delta_{n+m}, 
\end{align}
and all other commutators vanish. The normal ordering  of the  $\beta\gamma$ system is defined to be 
\begin{align}
    :\beta_m\gamma_n:=\begin{cases}
\gamma_n \beta_m & \text{ if } m\geq0 \\
\beta_m\gamma_n & \text{ if } m<0 
\end{cases}.
\end{align}
The stress energy tensor is now given again in terms of the Sugawara construction as follows,
\begin{align}
    T(z)=\frac{1}{2(k+2)}(:I_1^2(z):+:I_2^2(z):+:I_3^2(z):),
\end{align}
where $I_i$ are an orthonormal basis of $sl(2,\mathbb{C})$ with respect to the Killing form($K(X,Y)=\text{Tr}(XY)$). For our case we pick 
\begin{align}
    I_1=\frac{h}{\sqrt{2}},I_2=e+\frac{1}{2}f,I_3=ie-\frac{i}{2}f,
\end{align} 
Then 
\begin{align}
    T(z)&=\frac{1}{2(k+2)}(:I_1^2(z):+:I_2^2(z):+:I_3^2(z):)\nonumber\\
    &=\frac{1}{2(k+2)}(\frac{1}{2}:h^2(z):+:e(z)f(z):+:f(z)e(z):),
\end{align}
where
\begin{align}
    &:h^2(z):=4:\gamma^2(z)\beta^2(z):+2(k+2):\alpha^2(z):-4\sqrt{2(k+2)}*\alpha(z):\beta(z)\gamma(z):\\
    &:e(z)f(z):=-:\gamma^2(z)\beta^2(z):+\sqrt{2(k+2)}*\alpha(z):\beta(z)\gamma(z): +k:\frac{d\gamma(z)}{dz}\beta(z):\\
    &:f(z)e(z):= :e(z)f(z):~.
\end{align}
Thus
\begin{align}
    T(z)&=\frac{1}{2(k+2)}\left(\frac{2(k+2)}{2}:\alpha^2(z):+2k:\frac{d\gamma(z)}{dz}\beta(z):\right)\nonumber\\
    &=\frac{1}{2}:\alpha^2(z):+\frac{k}{k+2}:\frac{d\gamma(z)}{dz}\beta(z):\nonumber\\
    &=\frac{1}{2}\left(\sum_{m,n\in \mathbb{Z}}:\alpha_m\alpha_n: z^{-m-n-2}\right)+\frac{k}{k+2}\left(\sum_{m,n\in \mathbb{Z}}-n:\gamma_n\beta_m: z^{-m-n-2}\right).
\end{align}
Furthermore, we can add the term $\theta \partial\alpha(z)$ to the stress energy tensor as in the $gl(1)$ case while preserving the Virasoro algebra since $\alpha(z)$ commutes with $\frac{d\gamma(z)}{dz}\beta(z)$.

We then consider an irregular action of $\alpha$ modes while retaining a regular action for the $\beta$ and $\gamma$ modes:
\begin{align}
     &\alpha_0\mathcal{H}=\lambda \mathcal{H} ,\\
    &\alpha_{n}\mathcal{H}=n\frac{\partial}{\partial x_n}\mathcal{H},\textit{for\space} n>1, \\
&\alpha_{-n}\mathcal{H}=x_n\mathcal{H},\textit{for\space} n>1,\\
&\alpha_{-1}\mathcal{H}=\frac{\partial}{\partial x_1}\mathcal{H},\\
&\alpha_{1}\mathcal{H}=-x_1\mathcal{H}.
\end{align}

\begin{align}
    &\beta_{n}\mathcal{H}=\frac{\partial}{\partial y_n}\mathcal{H},\textit{for\space} n\geq1, \\
&\beta_{-n}\mathcal{H}=z_n\mathcal{H},\textit{for\space} n\geq0.
\end{align}

\begin{align}
    &\gamma_{n}\mathcal{H}=-\frac{\partial}{\partial z_n}\mathcal{H},\textit{for\space} n\geq0, \\
&\gamma_{-n}\mathcal{H}=y_n\mathcal{H},\textit{for\space} n\geq1.
\end{align}
Then for $\Omega_1=f(x_1)=\frac{1}{\sqrt{x_1}}$ we have:
\begin{align}
    L_0\Omega_1&=(x_1\frac{\partial}{\partial x_1}+\frac{\lambda^2}{2}-\theta \lambda)f(x_1),\\
    L_1\Omega_1&=-x_1(\lambda -2\theta)f(x_1),\\
    L_2\Omega_1&=\frac{1}{2}x_1^2 f(x_1),\\
    L_n\Omega_1&=0,\text{for $n>2$}.
\end{align}
Which can be simplified by scaling to obtain the  rank $1$ Gaiotto-Teschner state as in \cite{Gaiotto:2012sf}. 
We thus conclude that the correct form of the  have stress-energy tensor is
\begin{align}\label{sl2stress}
    T(z)=\frac{1}{2}:\alpha^2(z):+\frac{k}{k+2}:\frac{d\gamma(z)}{dz}\beta(z):-\frac{i\theta}{\sqrt{2}}\partial\alpha(z),
\end{align}
with the action of the $\alpha$ modes specified as
\begin{align}
     &\alpha_0\mathcal{H}=-i\sqrt{2}\lambda\mathcal{H} ,\\
    &\alpha_{n}\mathcal{H}=n\frac{\partial}{\partial x_n}\mathcal{H},\textit{for\space} n>1, \\
&\alpha_{-n}\mathcal{H}=x_n\mathcal{H},\textit{for\space} n>1,\\
&\alpha_{-1}\mathcal{H}=-i\frac{1}{\sqrt{2}}\frac{\partial}{\partial x_1}\mathcal{H},\\
&\alpha_{1}\mathcal{H}=-i\sqrt{2}x_1\mathcal{H}.
\end{align}
Furthermore, define the currents
\begin{align}
    h_-(z) &= \sum_{n > 0} h[n] z^{-n-1} + h[0] z^{-1},  \\
    h_+(z) &= \sum_{n<0} h[n] z^{-n-1},
\end{align}
and define similarly $e_{\pm}(z),~f_{\pm}(z)$. After the above rescaling,  we have the following action of such currents on the left vector $\Omega_1$ for the standard inner product of the vector space:
\begin{align}
     \langle  \Omega_1,h_+(z_i)  \cdots\rangle&=\langle \sum_{n>0}h_n z^{n-1} \Omega_1,  \cdots\rangle = \langle h_1  \Omega_1,  \cdots\rangle =-2i\sqrt{\kappa}x_1 \langle h_1  \Omega_1,  \cdots\rangle,\\
    \langle  \Omega_1,e_+(z_i)  \cdots\rangle&=\langle \sum_{n>0}e_n z^{n-1} \Omega_1,  \cdots\rangle=\langle \sum_{n>0}\beta_n z^{n-1} \Omega_1,  \cdots\rangle = 0,\\
    \langle  \Omega_1,f_+(z_i)  \cdots\rangle&=\langle \sum_{n>0}f_n z^{n-1} \Omega_1,  \cdots\rangle \propto\langle \sum_{n>0}(\sum_{l+m=n}\alpha_l\gamma_ m) z^{n-1} \Omega_1,  \cdots\rangle\nonumber\\
    &=\langle \alpha_1\gamma_0 z^{n-1} \Omega_1,  \cdots\rangle=0.
\end{align}

Also one can get any degree $r$ irregular state as in the previous section. In this case the action of the currents on the irregular state $\Omega_r$ changes accordingly to
\begin{align}
    \langle  \Omega_r,h_+(z_i)  \cdots\rangle&=\langle \sum_{n>0}h_n z^{n-1} \Omega_r,  \cdots\rangle= \langle \sum_{n > 0}^r \sqrt{2\kappa} \alpha_n z^{n-1} u_0, \cdots \rangle \nonumber \\ &=-2i\sqrt{\kappa}\sum_{n > 0}^r x
_n z^{n-1}\langle   \Omega_r,  \cdots\rangle,\\
 \langle  \Omega_r,e_+(z_i)  \cdots\rangle&= 0,\\
 \langle  \Omega_r,f_+(z_i)  \cdots\rangle&=\langle \sqrt{2\kappa} \sum_{n+m>0} \alpha_n \gamma_m z^{n+m-1} \Omega_r,\cdots \rangle \nonumber \\
    &=\langle \sqrt{2\kappa} \sum_{n=1}^r\sum_{m=1-n}^{-1} \alpha_n \gamma_m z^{n+m-1} \Omega_r,\cdots \rangle\nonumber\\
    &=-\sqrt{2\kappa}\langle  \sum_{n=1}^r\sum_{m=1-n}^{-1} x_n y_m z^{n+m-1} \Omega_r,\cdots \rangle.
\end{align}
The above relations are necessary for deriving the KZ equation in the next section.

\subsection{Beyond Gaiotto-Teschner state}
In general, we can create full rank 1 irregular representations for the $\widehat{sl(2)}$ Kac-Moody algebra by letting the action of $\beta,\gamma$ be irregular as well, given by:
\begin{align}
    &\beta_{n}\mathcal{H}=\frac{\partial}{\partial y_n}\mathcal{H},\textit{for\space} n>1, \\
&\beta_{-n}\mathcal{H}=z_n\mathcal{H},\textit{for\space} n\geq0,n \neq 1.\\
&\beta_{1}\mathcal{H}=y_1\mathcal{H}.\\
&\beta_{-1}\mathcal{H}=\frac{\partial}{\partial z_1}\mathcal{H}.\\
    &\gamma_{n}\mathcal{H}=-\frac{\partial}{\partial z_n}\mathcal{H},\textit{for\space} n\geq0, n \neq 1 \\
&\gamma_{-n}\mathcal{H}=y_n\mathcal{H},\textit{for\space} n>1,\\
&\gamma_1\mathcal{H}=z_1\mathcal{H},\\
&\gamma_{-1}\mathcal{H}=-\frac{\partial}{\partial y_1}\mathcal{H}.
\end{align}
Then the action of Virasoro generators can be written as:
\begin{align}
    L_n=\frac{1}{2}\sum_{m,l\in \mathbb{Z},m+l=n} :\alpha_m\alpha_l:+\frac{k}{k+2}\sum_{m,l\in \mathbb{Z},m+l=n} -m:\gamma_m \beta_l:+i\frac{\theta}{\sqrt{2}} (n+1)\alpha_n
\end{align}
For the vacuum state $\Omega=\frac
{1}{\sqrt{x_1}}f(y_1,z_1)$ one then obtains:
\begin{align}
    L_0\Omega&=(x_1\frac{\partial}{\partial x_1}+\lambda(\theta-\lambda)+\frac{k}{k+2}(\frac{\partial}{\partial y_1}y_1-\frac{\partial}{\partial z_1}z_1))f,\\
    L_1\Omega&=(-2x_1(\lambda -\theta)-\frac{k}{k+2}z_1z_0)f(x_1),\\
    L_2\Omega&=(-x_1^2+-\frac{k}{k+2}y_1z_1) f(x_1),\\
     L_n\Omega&=0,\text{for $n>2$}.
\end{align}
This construction can be generalized to the full rank $m$ irregular state by the following:
\begin{align}
    &\beta_{n}\mathcal{H}=\frac{\partial}{\partial y_n}\mathcal{H},\textit{for\space} n>m, \\
&\beta_{-n}\mathcal{H}=z_n\mathcal{H},\textit{for\space $n=0$  or $n>m$,} \\
&\beta_{i}\mathcal{H}=y_i\mathcal{H},i=1,2,...,m\\
&\beta_{-i}\mathcal{H}=\frac{\partial}{\partial z_i}\mathcal{H},i=1,2,...,m\\
    &\gamma_{n}\mathcal{H}=-\frac{\partial}{\partial z_n}\mathcal{H},\textit{for\space $n=0$  or $n>m$,}  \\
&\gamma_{-n}\mathcal{H}=y_n\mathcal{H},\textit{for\space} n>m,\\
&\gamma_i\mathcal{H}=z_i\mathcal{H},i=1,2,...,m\\
&\gamma_{-i}\mathcal{H}=-\frac{\partial}{\partial y_i}\mathcal{H},i=1,2,...,m.
\end{align}
And the action of Virasoro generators $\hat{L}_n$ on the corresponding irregular state \begin{equation}
    \Omega=\frac
{1}{\sqrt{x_1x_2..x_m}}f(\{y_i\},\{z_i\}),
\end{equation} now reads:
\begin{align}
    \hat{L}_0 &=L_0+\frac{k}{k+2}\sum_{i=1}^m i(\frac{\partial}{\partial y_i}y_i-\frac{\partial}{\partial z_i}z_i)),\\
    \hat L_i &= L_i-\frac{k}{k+2}(\sum_{j=1}^i jz_jy_{i-j}+z_0\sum_{j=1}^i z_j+\sum_{j=1}^{m-i}(j\frac{\partial}{\partial y_j}y_{i+j}-(i+j)\frac{\partial}{\partial z_j}z_{i+j})), \quad \text{for $i$=1,2,...m}\\
    \hat L_i &= L_i-\frac{k}{k+2}(\sum_{j=i-m}^{m} jz_{j}y_{i-j}), \quad \text{for $i=m,m+1,...2m$}\\  L_n &=0,\quad \text{for $n>2m$},
\end{align}
where the $L_i$ furnish the original irregular Gaiotto-Teschner representation.
\section{KZ equations}
\label{sec:KZeq}

In this section we will derive KZ equations for conformal blocks with one irregular operator at infinity and $N$ regular operators at position $z_i$. To this end, we will first introduce an intertwining operator for irregular states and then use such operators as building blocks to form conformal blocks. The commutation relations of the intertwining operator with current operator modes, known as gauge invariance, will then naturally lead to a KZ equation for a given current algebra. 

\subsection{Intertwining operators}
The intertwining operator is defined to be the operator $\Phi:V_{f_1}\rightarrow V_{f_2}\otimes V(z)$ such that they satisfy the intertwining condition :
\begin{align} \label{eq:gaugeinv1}
    \Phi a[n]=(a[n]\otimes 1+1\otimes z^n x)\Phi,
\end{align}
where $V(z)$ is an evaluation representation from a finite representation $V$ of a Lie algebra $\frak{g}$, and $a \in \frak{g}$. In order to obtain the KZ equation, $V(z)$ will later be chosen as the tensor product of the highest weight representations of $\widehat{\frak{g}}$. By definition, we have for elements $v \in V(z)$: 
\begin{align}
 a[n] v= a z^n v, \quad \text{for $v\in V$}.
\end{align}
The modules $V_{f_i}$ correspond to regular or irregular representations induced by vacuum states $f_i$. The relation \eqref{eq:gaugeinv1} in known as \textit{gauge invariance}. Another way to state this constraint is to define for $u \in V^*$ and $v \in V_{f_1}$ the function $\Phi_u(z)v \equiv u(\Phi(z)v)$. Then one can regard $\Phi_u(z)$ as an operator $V_{f_1} \rightarrow V_{f_2}$. The intertwining relation \eqref{eq:gaugeinv1} then translates to
\begin{equation} \label{eq:gaugeinv2}
    [a[n],\Phi_u(z)] = z^n \Phi_{a u}(z).
\end{equation}
Defining the currents
\begin{align}
    a_-(z) &= \sum_{n > 0} a[n] z^{-n-1} + a[0] z^{-1},  \\
    a_+(z) &= \sum_{n<0} a[n] z^{-n-1},
\end{align}
one can show that \eqref{eq:gaugeinv2} implies
\begin{equation} \label{eq:gaugeinv3}
    [a_{\pm}(w), \Phi_u(z)] = \frac{1}{z-w} \Phi_{a u}(z).
\end{equation}
Let us in the following apply this construction to the irregular $\widehat{gl(1)}$ and $\widehat{sl(2)}$ representations studied previously.

\subsection{The $\widehat{gl(1)}$ case}
We consider an irreducible $gl(1)$ representation $V$ which is one-dimensional. Thus $V(z)$ is also one-dimensional and we have the map $V_{f_1} \rightarrow V_{f_2}\otimes V(z)$ given by the intertwiner $\Phi_{u=1}$. Moreover, the expression \eqref{eq:Tgl1} for the stress-energy tensor gives us
\begin{align} \label{eq:opKZ}
    z \frac{d}{dz} \Phi_1(z) &= [L_0,\Phi_1(z)] - \Delta \Phi_1(z) = \frac{1}{2} z : \alpha(z) \Phi_{\alpha_0}(z) :,
\end{align}
Actually, if we use the modified version of the stress-energy tensor as in \eqref{eq:Tgl1}, then we will have an extra term $\alpha_0\theta\Phi_1(z)=\lambda \theta\Phi_1(z)$ on the right hand side of the equation. But this can be compensated by shifting the weight of the representation with the extra term $\lambda\theta$ because now the form of $L_0$ has changed for the extra term of $\theta\alpha_0$, but luckily acting on the highest weight representation it is still a constant. So the equation remains the same. The same argument holds for the $\widehat{sl(2)}$ case.

The normal ordering gives
\begin{equation}
    : \alpha(z) \Phi_{\alpha_0}(z) : = \alpha_+(z) \Phi_{\alpha_0}(z) + \Phi_{\alpha_0}(z)\alpha_-(z).
\end{equation}
We will now choose $V_{f_2}$ to be an irregular module given by $f_2 = \frac{1}{\sqrt{x_1}}$. Let $u_0$ be the irregular vacuum of the module $\left(V_{\frac{1}{\sqrt{x_1}}}\right)^*$ and $u_{N} \in V_1[0] = \mathbb{C} \cdot 1$ be the degree zero state of the regular module. Then, consider the \textit{conformal block} 
\begin{equation}
    \psi(z_1,\ldots,z_{N-1}) = \langle u_0, \Phi_1(z_1)\ldots \Phi_1(z_{N-1})u_N\rangle \in \mathbb{C}.
\end{equation}
Now we compute
\begin{align}
    \frac{\partial}{\partial z_i} \psi(z_1,\ldots,z_{N-1}) &= 2 \langle u_0, \Phi_1(z_1) \ldots \frac{\partial}{\partial z_i} \Phi_1(z_i)\ldots \Phi_1(z_{N-1})u_{N}\rangle \nonumber \\
    &= \langle u_0, \Phi_1(z_1)\ldots (a_+(z)\Phi_{\alpha_0}(z_i) + \Phi_{\alpha_0}(z_i)a_-(z))\ldots \Phi_1(z_N)u_N\rangle.
\end{align}
Next, we can use the commutation relations \eqref{eq:gaugeinv3} to pull the currents all the way to the right and left:
\begin{align}
    ~ &~ \langle u_0, \Phi_1(z_1)\ldots (\alpha_+(z)\Phi_{\alpha_0}(z_i) + \Phi_{\alpha_0}(z_i)\alpha_-(z))\ldots \Phi_1(z_N)u_N\rangle \nonumber \\
    ~ &= \sum_{j=1}^{i-1}\langle u_0, \Phi_1(z_1) \ldots [\Phi_1(z_j),\alpha_+(z_i)]\ldots \Phi_{\alpha_0}(z_i)\ldots \Phi_1(z_{N-1})u_N\rangle \nonumber \\
    ~ &\quad + \sum_{j=i+1}^{N-1}\langle u_0, \Phi_1(z_1)\ldots \Phi_{\alpha_0}(z_i)\ldots [\Phi_1(z_j),\alpha_-(z)]\ldots \Phi_1(z_{N-1})u_N\rangle \nonumber \\
    ~ &\quad + \langle u_0, \alpha_+(z_i)\Phi_1(z_1)\ldots \Phi_{\alpha_0}(z_i) \ldots \Phi_1(z_{N-1})u_N\rangle \nonumber \\
    ~ &\quad + \langle u_0, \Phi_1(z_1)\ldots \Phi_{\alpha_0}(z_i) \ldots \Phi_1(z_{N-1})\alpha_-(z_i)u_N\rangle~.
\end{align}
To obtain the final expression for the KZ equation, we just have to note that $\alpha_-(z)u_N = \alpha_0 u_n/z$, and the irregular action
\begin{equation}
    \langle u_0, \alpha_+(z) \cdots \rangle = - \langle \alpha_+(z) u_0, \cdots\rangle = x_1 \langle u_0, \cdots \rangle.
\end{equation}
Here the action is reversed because under the contragredient representation,  the action of $\alpha_n$ on $u_0$ gives $\alpha_{-n}$. 
Upon identifying $x_1 \equiv \Lambda$, this then gives the final form of our KZ equation
\begin{equation}
    2 \frac{\partial}{\partial z_i} \psi = \Lambda \alpha_0 \psi + \left(\sum_{j \neq i} \frac{\alpha_0^2}{z_i-z_j}\right)\psi.
\end{equation}

\subsection{The $\widehat{sl(2)}$ case}
The $sl(2)$ case is now analogous to the $gl(1)$ case with the only difference being that the operator KZ equation \eqref{eq:opKZ} is changed to
\begin{equation}\label{intertwiner}
    z \frac{d}{dz} \Phi_u(z) = \frac{1}{k + 2} z \left(\frac{1}{2} :h(z) \Phi_{H u}(z): + :e(z) \Phi_{Y u}(z): + : f(z) \Phi_{X u}(z): \right),
\end{equation}
where $h(z)$, $e(z)$ and $f(z)$ are the $\widehat{sl(2)}$ currents. \eqref{eq:sl2currents}.
\paragraph{Degree $r=1$.}
In order to derive the corresponding KZ equation, one has to follow the same steps as above. The irregular piece of the expression is given by
\begin{align}
   ~ &~ \frac{1}{2} \langle u_0, h_+(z_i) \Phi_u(z_1) \ldots \Phi_{h u}(z_i) \ldots \Phi_u(z_{N-1})u_N\rangle \nonumber \\
   ~ &+ \langle u_0, f_+(z_i) \Phi_u(z_1) \ldots \Phi_{e u}(z_i) \ldots \Phi_u(z_{N-1})u_N\rangle \nonumber \\
   ~&+ \langle u_0, e_+(z_i) \Phi_u(z_1) \ldots \Phi_{f u}(z_i) \ldots \Phi_u(z_{N-1})u_N\rangle.
\end{align}
The equation then follows from
\begin{equation}
    \langle h_+(z_i) u_0,  \cdots\rangle \propto \langle u_0, \cdots \rangle, \quad \langle e_+(z_i) u_0,  \cdots\rangle = 0, \quad \langle f_+(z_i) u_0,  \cdots\rangle = 0.
\end{equation}
Because in this case:
\begin{equation}
    \langle h_+(z_i) u_0,  \cdots\rangle=-2i\sqrt{\kappa}x_1\langle   u_0,  \cdots\rangle.
\end{equation} $-2i\sqrt{\kappa}x_1=-2i(ib)x_1=2bx_1$.
\begin{equation}
    \langle e_+(z_i) u_0,  \cdots\rangle=0,
\end{equation}
\begin{equation}
    \langle f_+(z_i) u_0,  \cdots\rangle=0.
\end{equation}
So we obtain the following irregular KZ equation for degree $1$ irregular Kac-Moody algebra with the location of irregular state at infinity:
\begin{equation}
      \kappa\frac{\partial}{\partial z_i} \psi = b x_1 H_i \psi + \left(\sum_{j \neq i} \frac{\Omega_{ij}}{z_i-z_j}\right)\psi,
\end{equation}
where $\Omega_{ij}$ is defined as
\begin{equation}
    \Omega_{ij} \equiv \frac{1}{2} H_i H_j + X_i Y_j + Y_i X_j,
\end{equation}
with 
\begin{equation}
    H \equiv h[0], \quad X \equiv e[0], \quad Y \equiv f[0].
\end{equation}

\paragraph{Degree $r=2$.}
For the degree $r=2$ case, we compute 
\begin{align}
    \langle h_+(z_i) u_0, \cdots\rangle = \langle \sqrt{2\kappa}(-x_1 - x_2 z) u_0, \cdots \rangle.
\end{align}
The action of $e_+(z)$ on the irregular vacuum $u_0$ still vanishes as $\beta(z)$ acts regularly. However, now we also have a non-trivial action of $f_+(z)$ on $u_0$ that was absent in the degree one case. To see this, note that the only terms contributing non-trivially are those containing $\alpha$ modes and thus we get
\begin{align}
    \langle f_+(z_i) u_0, \cdots \rangle &=  \langle \sqrt{2\kappa} (-x_2 y_1)u_0,\cdots \rangle.
\end{align}
Since $f_+(z_i) u_0$ is paired with the action of $X_i$ on the $i$th state, we therefore obtain the following irregular KZ equation in the degree $2$ case:
\begin{equation} \label{eq:r2KZ}
    \kappa \frac{\partial}{\partial z_i} \psi = \left(\frac{1}{2}\sqrt{2\kappa}(-x_1 - x_2 z_i) H_i - \sqrt{2\kappa} x_2 y_1 X_i + \sum_{j \neq i} \frac{\Omega_{ij}}{z_i - z_j} \right) \psi. 
\end{equation}

\subsection{Relation to irregular Liouville conformal blocks: Heights}\label{sec:heights}

Liouville conformal field theory is a non-unitary CFT with infinitely many primaries, where all primary operators have a vertex operator representation given by $V_\lambda(z)$ with $\lambda$ corresponding to the weight of the primary field. Moreover, conformal blocks of degenerate operators can be computed in the free field or so-called \textit{Coulomb gas} formalism. This implies that conformal blocks admit an integral representation corresponding to the free bosonic CFT partition function via integration over suitable contours. In the case where there is an irregular vertex operator, the free field construction also exists \cite{Dijkgraaf:2009pc,Gaiotto:2011nm,Gaiotto:2012sf,Nishinaka:2012kn,Rim:2012tf,Haghighat:2023vzu,Gu:2023plq}, but the integration contour may well be non-compact. More specifically, one finds for conformal blocks $\mathcal{F}$ of $N$ degenerate vertex operators and one degree one irregular operator at infinity corresponding to a countour $\Gamma$ the following result:
\begin{equation} \label{eq:Firr}
    \mathcal{F}_{\Gamma} = \braket{\prod_{a=1}^N V_{\frac{-k_a}{2b}}(z_a)I(\infty)} = \int_{\Gamma} \exp\left(\frac{1}{b^2} \mathcal{W}_{\mathrm{irr}}\right)\prod_i dw_i, \quad k_a \in \mathbb{Z},~b \in \mathbb{C},
\end{equation}
where $\mathcal{W}_{\mathrm{irr}}$ is given by
\begin{eqnarray}
    \mathcal{W}_{\mathrm{irr}} & \equiv & \sum_{i<j} -2 \log(w_i-w_j) + \sum_{i,a} k_a \log(w_i-z_a) - \sum_{a<b} \frac{1}{2} k_a k_b \log(z_a - z_b) \nonumber \\
    ~ & ~ & + \Lambda\left(\sum_i w_i - \frac{1}{2} \sum_a k_a z_a \right). \label{eq:Wirr}
\end{eqnarray}
For future convenience, it will be useful to abbreviate the integral representation as follows
\begin{equation}
    \mathcal{F}_{\Gamma} = \phi_0(z) \phi_{\Gamma}(z),
\end{equation}
where
\begin{align}
    \phi_0 &= \exp\left(-\frac{\Lambda}{2b^2} \sum_{a=1}^N k_a z_a\right)\prod_{a<b} (z_a - z_b)^{-\frac{k_a k_b}{2b^2}}, \\
    \phi_{\Gamma} &= \int_{\Gamma} \exp\left(\frac{\Lambda}{b^2}\sum_{i=1}^m w_i\right) \prod_{i < j} (w_i - w_j)^{-\frac{2}{b^2}}\prod_{i,a} (w_i - z_a)^{\frac{k_a}{b^2}} d^m w. \label{eq:phi}
\end{align}

We will see later that there is a correspondence between solutions of irregular KZ equations and the irregular conformal blocks of Liouville CFT. That is, particular combinations of derivatives of irregular Liouville conformal blocks will satisfy irregular KZ equations!  The number of screening operators in the Coulomb gas representation of irregular Liouville conformal blocks then corresponds to the height of a subspace in the tensor product of $\widehat{sl(2)}$ representations. By height we mean here the number of roots lower than the highest weight in the tensor product. What's more, the charges of the vertex operators in Liouville CFT will correspond to the weight of 
the $\widehat{sl(2)}$ representation.  Finally, the differential equation satisfied by derivatives of the irregular conformal blocks will be exactly the same as the differential equation that we derived in the $\widehat{sl(2)}$ case as a generalization of \cite{Haghighat:2023vzu}, if the following relationship between the two is imposed:
\begin{align}\label{LtoKZ}
    c=x_1=\frac{\Lambda}{2b},b=-i\kappa~.
\end{align}
This correspondence is expected from the general relation between Liouville theory and $SL(2,\mathbb{C})$ WZW theory, see for example \cite{Cordova:2016cmu}.
After the substitution \eqref{LtoKZ}, we can see that the degree-1 KZ equation will have the following form:
\begin{equation}\label{height1kz}
      -b^2\frac{\partial}{\partial z_i} \psi = \frac\Lambda2 H_i \psi + \left(\sum_{j \neq i} \frac{\Omega_{ij}}{z_i-z_j}\right)\psi,
\end{equation}
which will be later shown to hold when $\psi$ is a vector of (higher) derivatives of irregular Liouville conformal blocks.

One can also see this correspondence by comparing the number of conformal blocks in both cases. If we restrict our discussion to $N$-point conformal blocks of $1/2$ spin representations of $\widehat{sl(2)}$, then from the fusion rules one can see that there are $2^N$  conformal blocks, since there will be in total $N$ nodes in the fusion trees, and each tree can only have labels $0$ or $1$ from the fusion rules of affine $\widehat{sl(2)}$ representations.

\begin{equation} \label{eq:cfblocktree}
    \langle \phi_{\lambda_1}(z_1)\ldots \phi_{\lambda_N}(z_n)I(\infty)\rangle_{\mu_1,\ldots,\mu_{N-2}}=\adjincludegraphics[valign=c]{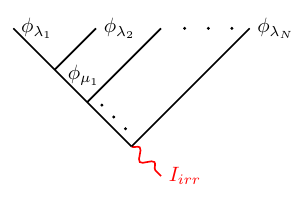}
\end{equation}

To make the connection to Liouville blocks, note that identifying $\frac{k_a}{2}$ with the highest weight $\lambda_a$, one obtains for spin $\frac{1}{2}$ operators $V_{\frac{-1}{2b}}$. Indeed, their fusion rule
\begin{equation}
    V_{-\frac{1}{2b}} \times V_{-\frac{1}{2b}} = V_0 + V_{-\frac{1}{b}},
\end{equation}
reflects the fusion rule of two spin one-half fields. Now, for the $N$-point irregular conformal block $\braket{\prod_{i=1}^N V_{\frac{-1}{2b}}(z_i)I(\infty)}$ with the irregular operator located at infinity, the number of conformal blocks is the number of Lefschetz thimbles, that is, the number of independent integration cycles. Such integration thimbles are composed of steepest descent paths flowing out from critical points of a superpotential $\mathcal{W}(w,z)$. So for a critical point $\sigma$ of $\mathcal{W}(w,z)$, the associated Lefschetz thimble is obtained by joining all paths $\mathcal{W}(t)$, $t \in (-\infty,0]$ which solve the first order equation
\begin{equation}
    \frac{d\bar{w}}{dt} = - \frac{\partial \mathcal{W}}{\partial w}, \quad \frac{dw}{dt} = - \frac{\partial \bar{\mathcal{W}}}{d \bar{w}}, \quad \mathcal{W}(-\infty) = \sigma. 
\end{equation}
Such thimbles can either arise from flows between two critical points but in more general situations are non-compact and flow to infinity along both ends. For $\braket{\prod_{i=1}^N V_{\frac{-1}{2b}}(z_i)I(\infty)}$, we have
\begin{eqnarray}
    \mathcal{W} & \equiv & \sum_{i<j} -2 \log(w_i-w_j) + \sum_{i,a}  \log(w_i-z_a) - \sum_{a<b} \frac{1}{2} \log(z_a - z_b) \nonumber \\
    ~ & ~ & + \Lambda\left(\sum_i w_i - \frac{1}{2} \sum_a  z_a \right). \label{eq:Wirr}
\end{eqnarray}
One can see that the number of critical points of $\mathcal{W}$ depends on the number of screening charges, or the number of $w$ variables. For $i$ screening charges, there are $\binom{N}{i}$ independent paths. Because the number of paths can be determined from the equation of critical points of $\mathcal{W}$ :
\begin{align}
    \sum_j \frac{1}{w_i-z_j}=-2\sum_{j\neq i}\frac{1}{w_i-w_j}+\Lambda.
\end{align}
when we consider the limit $\Lambda\rightarrow\infty$, we can see that the $w_i$'s should be near one of the $z_j$'s. So this gives $\binom{N}{i}$ solutions for each of such choice. And in total we will have $\sum_{i=0}^n \binom{N}{i}=2^{N}$ paths, which matches the result from the representations of affine Kac-Moody algebra. And for general $k_i$ and $2$ $z_i$'s, this relation is shown in \cite{Hu2016}. We here conjecture that this relationship is also correct for general $k_i$ and any number of $z_i$'s.

Next we derive the exact KZ equation in different subspaces in terms of their heights, namely the subspace generated by the highest weight vector with the total number of lowering operators acting on it fixed. The form of the KZ equation will be block diagonal in different heights since the operator $\Omega$ preserves the heights.

\paragraph{Height 1.}
Here we want to derive the KZ equation obtained in \cite{Haghighat:2023vzu} from previous sections. To this end, note that the corresponding conformal blocks are valued in 
\begin{equation}
    V_{k_1} \otimes \cdots V_{k_N},
\end{equation}
where 
\begin{equation}
    V_k = \{ Y^l v_k, ~ l=0,1,2,\ldots |  X v_k = 0, ~H v_k = k v_k \},
\end{equation}
and 
\begin{equation}\label{hiwei}
    v_{k_1} \otimes v_{k_2} \otimes v_{k_N} \in V_{k_1} \otimes \cdots V_{k_N}.
\end{equation}
$H,X,Y$ satisfy :
\begin{align}
    \left[H,X\right]=2X, \left[H,Y\right]=-2Y,
     \left[X,Y\right]=H.
\end{align}
Now define
\begin{equation}
    v^{(a)} = v_{k_1} \otimes \cdots \otimes Y v_{k_a} \otimes \cdots \otimes v_{k_N},
\end{equation}
and set 
\begin{equation}
    \Omega_{ab} \equiv \frac{H_a H_b}{2} +  X_a Y_b + X_b Y_a \quad \textrm{where} \quad \alpha_a = 1 \otimes \cdots \otimes \underbrace{\alpha}_{a} \otimes \cdots \otimes 1,
\end{equation}

that is an operator $\alpha_a$ acts as $\alpha$ on the $a$th entry of the tensor product. Next, we compute
\begin{equation}
    \Omega_{cd} v^{(a)} = \frac{k_c k_d}{2} v^{(a)} \quad \textrm{if} \quad c,d \neq a, 
\end{equation}
and 
\begin{eqnarray}
    \Omega_{ab} v^{(a)} & = & \frac{H_a H_b}{2} v^{(a)} + Y_b X_a v^{(a)} \nonumber \\
    ~ & = & \frac{(k_a-2)k_b}{2} v^{(a)} + k_a v^{(b)} \nonumber \\
    ~ & = & \frac{k_a k_b}{2} v^{(a)} - k_b v^{(a)} + k_a v^{(b)}. 
\end{eqnarray}
Thus, we get that on weight one states of the form
\begin{equation}
    V^{(1)} \equiv \left(V_{k_1} \otimes \cdots \otimes V_{k_N}\right)^{(1)} = \bigoplus_{a=1}^N \mathbb{C}v^{(a)},
\end{equation}
the action of $\Omega_{cd}$ is given by
\begin{equation}
    \Omega_{cd} v^{(a)} = \frac{k_a k _d}{2} v^{(a)} + \widehat{\Omega}_{cd} v^{(a)}.
\end{equation}

So the action of $\Omega_{ad}$ has the following matrix expression:
\begin{align}
    \Omega_{ad}=\frac{k_a k_d}{2}I_N+\begin{pmatrix}
 -k_d& k_a \\
k_d & -k_a \\
\end{pmatrix}.
\end{align}
Thus, if we constrain ourselves to consider only ``height" one states, meaning including only one $Y$ operator acting on one of the highest weight vectors in $v_{k_1} \otimes \cdots \otimes  v_{k_a} \otimes \cdots \otimes v_{k_N}$, then we will get the following KZ equation:
\begin{align}
    \kappa\partial_i\psi=\left(\frac{\Lambda}{2}H_i+\sum_{j\neq i}\frac{\Omega_{ij}}{z_i-z_j}\right)\psi,
\end{align}
where $H_i$ has the following explicit representation,
\begin{align}
    H_i=\begin{pmatrix}
k_i &0  &0  &...  &0  \\
 0& ... &0  &...  & 0 \\
 0& ... &k_i-2  &...  &0  \\
 0& ... &0  &..  & 0 \\
 0& 0 &0  & ... &k_i  \\
\end{pmatrix}.
\end{align}

At this point we can connect to the results of \cite{Haghighat:2023vzu},  after we take into account the contribution of the free part, that is, the factor $(\exp(-\sum_i\frac{\Lambda k_i z_i}{2b^2}))(z_i-z_j)^{\frac{-k_i k_j}{2b^2}}$. Noting that, in the case of one screening charge, $\nabla \phi$ with $\phi$ given by \eqref{eq:phi} is a solution of the irregular KZ equation in \cite{Haghighat:2023vzu}, and letting 
\begin{align}
    \psi(z)=(\exp(\sum_i\frac{\Lambda k_i z_i}{2b^2}))\prod_{i>j}(z_i-z_j)^{\frac{-k_i k_j}{2b^2}}\nabla\phi(z)=f(z)\nabla\phi(z),
\end{align}
the equation satisfied by $\psi(z)$ is as follows:
\begin{align}
    \partial_i \psi&=(\partial_i f)\nabla\phi+f\partial_i(\nabla\phi)\nonumber\\
    &=(\partial_i f)\frac{\psi}{f}+f(A_i+\sum_{j\neq i}\frac{\widehat{\Omega}_{ij}}{b^2(z_i-z_j)})\nabla\phi \nonumber\\
    &=(-\sum_{j\neq i}\frac{k_i k_j}{2b^2(z_i-z_j)}-\frac{\Lambda k_i}{2b^2}) \psi+f(A_i+\sum_{j\neq i}\frac{\widehat{\Omega}_{ij}}{b^2(z_i-z_j)})\nabla\phi\nonumber\\
    &=-\frac{1}{b^2}(\frac{\Lambda}{2} H_i+\sum_{j\neq i}\frac{\Omega_{ij}}{z_i-z_j})\psi,
\end{align}
where $\widehat{\Omega}_{ij}$ is the Omega matrix specified in \cite{Haghighat:2023vzu}. 
Here we used that
\begin{align}
    \partial_i f=\left(-\sum_{j\neq i}\frac{k_i k_j}{2b^2(z_i-z_j)}-\frac{\Lambda k_i}{2b^2}\right) f.
\end{align}
Thus the column vector $\psi$ will satisfy the same equation as the height one KZ equation.

\paragraph{Height 2.}
Following the previous experience, we will figure out the case in the presence of two screening charges. Denote 
\begin{align}
    v^{(a,b)}&=v_{k_1} \otimes \cdots \otimes Y v_{k_a} \otimes \cdots \otimes Y v_{k_b}\otimes \cdots \otimes v_{k_n}, \\
    v^{(a,a)}&=v_{k_1} \otimes \cdots \otimes Y^2 v_{k_a} \otimes \cdots  \otimes v_{k_n}.
\end{align}
Similar to above, we have 
\begin{align}
    \Omega_{ab}v^{(c,d)}&=\frac{k_a k_b}{2} v^{(c,d)}, \label{abcd}\\ 
    \Omega_{ab}v^{(a,d)}&=\frac{(k_a-2) k_b}{2} v^{(a,d)}+k_a v^{(b,d)},\label{abad}\\ 
    \Omega_{ab}v^{(a,b)}&=\frac{(k_a-2) (k_b-2)}{2} v^{(a,b)}+k_a v^{(b,b)}+k_b v^{(a,a)},\label{abab}\\ 
    \Omega_{ab}v^{(a,a)}&=\frac{(k_a-4)k_b}{2} v^{(a,a)}+((k_a-2)+k_a) v^{(a,b)}\label{abaa}, 
\end{align}
by using the following:
\begin{align}
    XY^2-Y^2X=XY^2-YXY+YXY-Y^2X=(XY-YX)Y+Y(XY-YX)=HY+YH
\end{align}
Using these results, we can prove the following theorem regarding integral solutions of the height two KZ equation:
    \begin{theorem}\label{th:height2}
    (Integral expression for degree 1 Height 2 case) Consider the following degree one irregular $N$-point Liouville conformal blocks:
     \begin{align*}
   \phi&=\int_{\Gamma} dw_1 dw_2 A(w_1,w_2,z),\\
    A(w_1,w_2,z)&=\exp(\frac{\Lambda}{b^2}(w_1+w_2))(w_1-w_2)^{-\frac{2}{b^2}} \prod_{i=1}^n(w_1-z_i)^{\frac{k_i}{b^2}}\prod_{i=1}^n(w_2-z_i)^{\frac{k_i}{b^2}}.
\end{align*}
We further define
\begin{align}
    \phi^{(i,i)}&=k_i(k_i-1)\int A\frac{1}{(w_1-z_i)(w_2-z_i)},\\
    \phi^{(i,j)}&=\frac{k_i k_j}{2}\int A(\frac{1}{(w_1-z_i)(w_2-z_j)}+\frac{1}{(w_1-z_j)(w_2-z_i)}).
    \end{align}
    Then for $\psi=\begin{pmatrix}
\phi^{(1,1)} \\
 \phi^{(1,2)}\\
 ...\\
\phi^{(n,n)}
\end{pmatrix}$, we have\\
\\
    \noindent\textbf{1)}  $\psi$ will satisfy the degree 1 height 2 irregular KZ equation of  $\widehat{sl(2)}$ with $\kappa=-b^2$, 
    \begin{align}
        \partial_i\psi= A_i\psi+\sum_{j\neq i}\frac{\widehat\Omega_{ij}}{z_i-z_j}\psi,
    \end{align}
    where $A_i$ is a constant diagonal matrix specified by
    \begin{align}
        &A_i \phi^{(j,l)}=0,\\
        &A_i \phi^{(j,j)}=0,\\
        &A_i\phi^{(i,i)}=2\frac{\Lambda}{b^2}\phi^{(i,i)},\\
        &A_i\phi^{(i,j)}=\frac{\Lambda}{b^2}\phi^{(i,j)}.
    \end{align}
    This equation is equivalent to the height 2 $\widehat{sl(2)}$ irregular KZ equation if we consider the following rescaling analogous to the height 1 case:
    \begin{align}
        \psi\rightarrow \exp\left(-\sum_i\frac{\Lambda k_i z_i}{2b^2}\right)(z_i-z_j)^{\frac{-k_i k_j}{2b^2}}\psi.
    \end{align}
    
    \noindent\textbf{2)} The entries of $\psi$ are combinations of $\partial \phi$ and $\partial^2 \phi$, more specifically:
    \begin{align}
    & \frac{2}{b^4}\phi^{(i,i)}=-\frac{\partial^2 \phi}{\partial^2 z_i}+\sum_{j\neq i}\frac{1}{z_i-z_j} (\frac{k_j}{b^2} \frac{\partial \phi}{\partial z_i}-\frac{k_i}{b^2} \frac{\partial \phi}{\partial z_j})+\frac{\Lambda}{b^2}\frac{\partial \phi}{\partial z_i},\label{ii}\\
    &\frac{2}{b^4}\phi^{(i,j)}=-\frac{\partial^2 \phi}{\partial z_i\partial z_j}+\frac{1}{z_i-z_j}(-\frac{k_j}{b^2}\frac{\partial \phi}{\partial z_i}+\frac{k_i}{b^2}\frac{\partial \phi}{\partial z_j}).\label{ij} 
    \end{align}
\end{theorem}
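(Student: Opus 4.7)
The plan is to establish part 2) first by direct differentiation of the master function $\phi$ combined with integration by parts to eliminate double-pole integrals, and then deduce part 1) by differentiating the integral representations of $\phi^{(a,b)}$ and matching the outcome with the representation-theoretic action of $\widehat{\Omega}$ computed in \eqref{abcd}--\eqref{abaa}. Throughout, all integration-by-parts boundary terms will be assumed to vanish, which holds when $\Gamma$ is a closed cycle or a Lefschetz thimble on which the integrand decays.

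For part 2), I would differentiate $\phi$ under the integral sign. One derivative produces
\[
\partial_i\phi \;=\; -\tfrac{k_i}{b^2}\int A\left(\tfrac{1}{w_1-z_i}+\tfrac{1}{w_2-z_i}\right) d^2 w,
\]
and differentiating again with respect to $z_j$ with $j\neq i$ produces four integrals: two cross-pole terms that reassemble into $\phi^{(i,j)}$, and two collinear-pole terms $\int A/[(w_\alpha-z_i)(w_\alpha-z_j)]$ which decompose via the partial-fraction identity $\tfrac{1}{(w-z_i)(w-z_j)}=\tfrac{1}{z_i-z_j}\bigl(\tfrac{1}{w-z_i}-\tfrac{1}{w-z_j}\bigr)$ into combinations of $\partial_i\phi$ and $\partial_j\phi$. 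Rearranging yields \eqref{ij}. The case $\partial_i^2\phi$ is harder because it also produces genuine double-pole integrals $\int A/(w_\alpha-z_i)^2$ which do not correspond to any $\phi^{(a,b)}$. These are eliminated by the integration-by-parts identity $\int\partial_{w_\alpha}[A/(w_\alpha-z_i)]\,d^2w=0$, using that $\partial_{w_\alpha}\log A$ contributes a $\Lambda/b^2$ term, single poles $k_j/[b^2(w_\alpha-z_j)]$, and a $-2/[b^2(w_\alpha-w_\beta)]$ piece. Substituting back and simplifying by partial fractions reduces the double poles to a combination of $\Lambda\partial_i\phi$, $\sum_{j\neq i}(\cdots)\partial_j\phi$ terms, and the symmetric integral $\int A/[(w_1-z_i)(w_2-z_i)]$ which equals $\phi^{(i,i)}/[k_i(k_i-1)]$. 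Solving for $\phi^{(i,i)}$ gives \eqref{ii}.

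For part 1), with the formulas from part 2) available, I would differentiate the integral representations of $\phi^{(a,b)}$ directly with respect to $z_l$. The derivative acts on the $A$ factor (producing insertions $-k_l/[b^2(w_\alpha-z_l)]$ together with a $\Lambda/b^2$ source once the resulting new double poles are eliminated via the same integration-by-parts identity as above) and, when $l\in\{a,b\}$, on the explicit rational factors $1/(w_\alpha-z_a)$ (generating further double poles in $w_\alpha-z_l$). After eliminating all new double poles and performing partial-fraction decompositions on the remaining cross-poles, $\partial_l\phi^{(a,b)}$ is expressed as a linear combination of $\phi^{(c,d)}$'s with coefficients of the form $1/(z_l-z_m)$, plus a diagonal $\Lambda$-source. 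Matching these coefficients against the $\widehat{\Omega}_{lm}$ matrix read off from \eqref{abcd}--\eqref{abaa} yields the off-diagonal KZ terms, while the $\Lambda$-source becomes the diagonal matrix $A_l$ with eigenvalues $0$, $\Lambda/b^2$, $2\Lambda/b^2$ on $\phi^{(c,d)}$, $\phi^{(l,c)}$, and $\phi^{(l,l)}$ respectively (the factor of $b^2$ is then absorbed into the normalization $\kappa=-b^2$). The equivalence with the standard height-$2$ KZ equation follows as in the height-$1$ argument preceding the theorem by absorbing the common prefactor $\phi_0$.

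The principal obstacle is the combinatorial bookkeeping required to match the coefficients produced by the nested integration-by-parts and partial-fraction manipulations with the shifted coefficients $(k_a-2)k_b/2$, $(k_a-2)(k_b-2)/2$, and $(k_a-4)k_b/2$ appearing in \eqref{abad}--\eqref{abaa}. These shifts encode how the Leibniz rule distributes the derivative between the $A$ factor and the rational insertions, and reconstructing them cleanly requires careful tracking of which $w_\alpha$-derivative is used in each elimination step. A secondary subtlety is the symmetrization step identifying $\int A\,f(w_1)$ with $\int A\,f(w_2)$, needed to combine $w_1$- and $w_2$-symmetric pole structures; this relies on the $w_1\leftrightarrow w_2$ symmetry of $A$ up to branch choices of $(w_1-w_2)^{-2/b^2}$, which must be compatible with the chosen contour $\Gamma$.
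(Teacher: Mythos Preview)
Your proposal is correct and follows essentially the same route as the paper's proof in Appendix~\ref{Proof1}: direct differentiation of the integral representations, elimination of the double-pole integrals $\int A/(w_\alpha-z_i)^2$ via integration by parts in the $w_\alpha$ variables (which is where the $\Lambda/b^2$ source enters, from $\partial_{w_\alpha}\log A$), and repeated partial-fraction decompositions to rewrite everything as linear combinations of the $\phi^{(a,b)}$, then matched against \eqref{abcd}--\eqref{abaa}. The only cosmetic difference is that the paper handles part 1) before part 2), but since neither part is actually invoked in proving the other (your remark that part 2) is ``available'' for part 1) is harmless but unused), the ordering is immaterial; your cautions about bookkeeping and the $w_1\leftrightarrow w_2$ symmetrization are accurate descriptions of where the work lies.
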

The explicit proof is presented in appendix \ref{Proof1}, but as an example, we will prove the theorem in the case of the 2-point function, where the basis  is given by $v^{(1,1)},v^{(1,2)},v^{(2,2)}$.
The explicit form of $\Omega$ matrix under this basis becomes:
\begin{align}
    \Omega_{12}=\frac{k_1 k_2}{2}I_3+\begin{pmatrix}
-2k_2 & 2k_1-2 & 0 \\
 k_2& 2-(k_1+k_2) & k_1  \\
 0& 2k_2-2 & -2k_1 \\
\end{pmatrix}
\end{align}
We now show that for the 2-point function with 2 screening charges 
   \begin{align}
   \phi&=\int_{\Gamma} dw_1 dw_2 A(w_1,w_2,z),\\
    A(w_1,w_2,z)&=(w_1-w_2)^{-\frac{2}{b^2}} (w_1-z_1)^{\frac{k_1}{b^2}}(w_1-z_2)^{\frac{k_2}{b^2}}(w_2-z_1)^{\frac{k_1}{b^2}}(w_2-z_2)^{\frac{k_2}{b^2}},
\end{align}
the column vector
\begin{align}
    \psi=\begin{pmatrix}
\phi^{(1,1)} \\
\phi^{(1,2)} \\
\phi^{(2,2)}
\end{pmatrix}=\begin{pmatrix}
\int A(\frac{1}{(w_1-z_1)(w_2-z_1)})\\
\int A(\frac{1}{(w_1-z_2)(w_2-z_1)}+\frac{1}{(w_2-z_2)(w_1-z_1)}) \\
\int A(\frac{1}{(w_1-z_2)(w_2-z_2)}).
\end{pmatrix}
\end{align}
will satisfy the KZ equation:
\begin{align}\label{KZ2}
    -b^2 \partial_1 \psi=A_1\psi+\frac{\widehat{\Omega}_{12}^T}{z_1-z_2}\psi,   -b^2\partial_2 \psi=A_2\psi+\frac{\widehat{\Omega}_{21}^T}{z_2-z_1}\psi,
\end{align}
where
\begin{align}
\widehat{\Omega}_{12}=\widehat{\Omega}_{21}=\begin{pmatrix}
-2k_2 & 2k_1-2 & 0 \\
 k_2& 2-(k_1+k_2) & k_1  \\
 0& 2k_2-2 & -2k_1 \\
\end{pmatrix},
\end{align}
\begin{align}
    A_1=-\begin{pmatrix}
2\Lambda & 0 & 0 \\
 0& \Lambda & 0  \\
 0& 0 & 0 \\
\end{pmatrix},
A_2=-\begin{pmatrix}
0 & 0 & 0 \\
 0& \Lambda & 0  \\
 0& 0 & 2\Lambda \\
\end{pmatrix}.
\end{align}
\begin{proof}
    We first notice that 
\begin{align}
   \frac{\partial \phi^{(1,2)}}{\partial z_1\hfill}=\frac{\partial}{\partial z_1}(\int A(\frac{1}{(w_1-z_2)(w_2-z_1)}+\frac{1}{(w_2-z_2)(w_1-z_1)}))
\end{align}
And:
\begin{align}
    &\frac{\partial}{\partial z_1}\int A(\frac{1}{(w_1-z_2)(w_2-z_1)})\nonumber\\
    &=-\int A\frac{1}{w_1-z_2}((\frac{k_1}{b^2}-1)\frac{1}{(w_2-z_1)^2}+\frac{k_1}{b^2}\frac{1}{(w_2-z_1)(w_1-z_1)})\nonumber\\
    &=-\frac{k_1}{b^2}\frac{1}{z_1-z_2}\int A\frac{1}{(w_2-z_1)}(\frac{1}{(w_1-z_1)}-\frac{1}{(w_1-z_2)})-\int dw_1 \int \frac{A_{w_2,z_1}}{(w
_1-z_2)}d((w_2-z_1)^{\frac{k_1}{b^2}-1})\nonumber\\
&=...+\int A \frac{1}{(w_1-z_2)(w_2-z_1)}(\frac{2}{b^2}\frac{1}{w_1-w_2}+\frac{k_2}{b^2}\frac{1}{w_2-z_2}+\frac{\Lambda}{b^2})\nonumber\\
&=-\frac{k_1}{b^2}...+\frac{2}{b^2}...+\frac{1}{z_1-z_2}\frac{k_2}{b^2}\int A(\frac{1}{(w_1-z_2)}(\frac{1}{w_2-z_1}-\frac{1}{w_2-z_2}))+\frac{\Lambda}{b^2}\int A \frac{1}{(w_1-z_2)(w_2-z_1)}.
\end{align}
And similarly
\begin{align}
    &\frac{\partial}{\partial z_1}\int A(\frac{1}{(w_2-z_2)(w_1-z_1)})\nonumber\\
    &=-\frac{k_1}{b^2}\frac{1}{z_1-z_2}\int A\frac{1}{(w_1-z_1)}(\frac{1}{(w_2-z_1)}-\frac{1}{(w_2-z_2)})-\frac{2}{b^2}\int A \frac{1}{(w_2-z_2)(w_1-z_1)(w_1-w_2)}\nonumber\\
    &+\frac{1}{z_1-z_2}\frac{k_2}{b^2}\int A(\frac{1}{(w_2-z_2)}(\frac{1}{w_1-z_1}-\frac{1}{w_1-z_2}))+\frac{\Lambda}{b^2}\int A \frac{1}{(w_2-z_2)(w_1-z_1)}.
\end{align}
Notice that
\begin{align}
    &\int A (\frac{1}{(w_1-z_2)(w_2-z_1)(w_1-w_2)}-\frac{1}{(w_2-z_2)(w_1-z_1)(w_1-w_2)})\nonumber\\
    &=\int A \frac{1}{z_1-z_2}(\frac{1}{w_1-z_1}-\frac{1}{w_1-z_2})(\frac{1}{w_2-z_1}-\frac{1}{w_2-z_2})\nonumber\\
    &=\frac{1}{z_1-z_2}\int A(\frac{1}{(w_1-z_1)(w_2-z_1)}- (\frac{1}{(w_1-z_2)(w_2-z_1)}+\frac{1}{(w_2-z_2)(w_1-z_1)})+\frac{1}{(w_1-z_2)(w_2-z_2)}).
\end{align}
Combining them together, we get:
\begin{align}
    b^2\frac{\partial \phi^{(1,2)}}{\partial z_1\hfill}=\frac{\Lambda}{b^2}\phi^{(1,2)}+\frac{1}{z_1-z_2}((-2k_1+2)\phi^{(1,1)}+(k_1+k_2-2)\phi^{(1,2)}+(-2k_2+2)\phi^{(2,2)}).
\end{align}
And for $\phi^{(1,1)}$,we have
\begin{align}
    \frac{\partial \phi^{(1,1)}}{\partial z_1 \hfill}=-(\frac{k_1}{b^2}-1)\int A(\frac{1}{(w_1-z_1)^2(w_2-z_1)}+\frac{1}{(w_1-z_1)(w_2-z_1)^2}).
\end{align}
\begin{align}
    &\int A\frac{1}{(w_1-z_1)^2(w_2-z_1)}=\int dw_2 \int (\frac{k_1}{b^2}-1)^{-1} A_{w_1,z_1}\frac{1}{w_2-z_1}d((w_1-z_1)^{(\frac{k_1}{b^2}-1)})\nonumber\\
    &=-(\frac{k_1}{b^2}-1)^{-1}\int \frac{A}{(w_2-z_1)(w_1-z_1)}(-\frac{2}{b^2}\frac{1}{w_1-w_2}+\frac{k_2}{b^2}\frac{1}{w_1-z_2}+\frac{\Lambda}{b^2}),\nonumber
    \end{align}
And we notice that
\begin{align}
   &\int A(\frac{1}{(w_1-z_1)^2(w_2-z_1)}+\frac{1}{(w_1-z_1)^2(w_2-z_1)})\nonumber\\
   &=-(\frac{k_1}{b^2}-1)^{-1}\frac{2k_2}{b^2}\frac{1}{z_1-z_2}\int A \frac{1}{(w_1-z_1)(w_2-z_1)}\nonumber\\
   &+(\frac{k_1}{b^2}-1)^{-1}\frac{k_2}{b^2}\frac{1}{z_1-z_2}\int A (\frac{1}{(w_2-z_1)(w_1-z_2)}+\frac{1}{(w_1-z_1)(w_2-z_2)})+\frac{2\Lambda}{b^2}\int A \frac{1}{(w_1-z_1)(w_2-z_1)}.\nonumber
\end{align}
Thus
\begin{align}
    b^2\frac{\partial \phi^{(1,1)}}{\partial z_1\hfill}=2\frac{\Lambda}{b^2}\phi^{(1,1)}+\frac{1}{z_1-z_2}(2k_2\phi^{(1,1)}-k_2\phi^{(1,2)}).
\end{align}
And for $\phi^{(2,2)}$, we have
\begin{align}
   &\frac{\partial \phi^{(2,2)}}{\partial z_1\hfill}=-\frac{k_1}{b^2}\int A \frac{1}{(w_1-z_2)(w_2-z_2)}(\frac{1}{w_1-z_1}+\frac{1}{w_2-z_1})\nonumber\\
   &=-\frac{k_1}{b^2}\int A\left(\frac{1}{z_1-z_2}(\frac{1}{(w_2-z_2)(w_1-z_1)}-\frac{1}{(w_2-z_2)(w_1-z_2)})\right.\nonumber\\
   &\left.+\frac{1}{z_1-z_2}(\frac{1}{(w_1-z_2)(w_2-z_1)}-\frac{1}{(w_1-z_2)(w_2-z_2)})\right)\nonumber\\
   &=\frac{1}{b^2}\frac{1}{z_1-z_2}(-k_1\phi^{(1,2)}+2k_1\phi^{(2,2)}).
\end{align}
Thus the $\widehat{\Omega}$ matrix is of the following form:
\begin{align}
    -\begin{pmatrix}
2k_2 &-k_2  & 0 \\
-2k_1+2 & k_1+k_2-2 & -2k_2+2 \\
 0& -k_1 & 2k_1  \\
\end{pmatrix},
\end{align}
which is the conjugate of $\widehat\Omega_{12}$, and
\begin{align}
    A_1=\begin{pmatrix}
2\Lambda & 0 & 0 \\
 0& \Lambda & 0  \\
 0& 0 & 0 \\
\end{pmatrix}.
\end{align}
The other case involving $\partial_2$ can be shown similarly.

Furthermore, we can recover the original height 2 KZ equation after performing the following rescaling of the $\psi$ vector,
\begin{align}
    \psi\mapsto \left(
\begin{array}{ccc}
 (k_1-1) k_1& 0 & 0 \\
 0 & \frac{k_1 k_2}{2 } & 0 \\
 0 & 0 & (k_2-1) k_2 \\
\end{array}
\right) \psi,
\end{align}
which translates to
\begin{equation}
    \widehat{\Omega}^T \mapsto \widehat{\Omega}.
\end{equation}
\end{proof}

\subsection{Integral representations for arbitrary height}

Here we can generalize the above proofs to show that in more general cases irregular Liouville conformal blocks satisfy the KZ equation of arbitrary height.

\begin{theorem} \label{th:heightm}
     Define $N$-point degree 1 irregular Liouville CFT correlation function with $m$ screening operators:
    \begin{align}
        \mathcal{F}_{\Gamma}(z)=\exp\left(-\frac{\Lambda}{2b^2}\sum_{i=1}^N {k_i z_i}\right)\prod_{i<j}(z_i-z_j)^{-\frac{k_ik_j}{2b^2}}\int_{\Gamma} A({\bold{z,w}}) d^m w=\phi_0(z) \phi_{\Gamma}(z) ,
    \end{align}
    where
    \begin{align}
        A({\bold{z,w}})= \exp\left(\frac{\Lambda}{b^2}\sum_i w_i\right)\prod_{i<j}(w_i-w_j)^{-\frac{2}{b^2}}\prod_{i,j}(w_i-z_j)^\frac{k_j}{b^2}.
    \end{align}
   For integer $\bold{m}=(m_1,m_2,...,m_N)$, $m_i\geq 0$, $\sum_i m_i=m$, let
    \begin{align}
       \phi^{(\bold{m})}= \int A({\bold{z,w}}) d^m w \left(\prod_{i=1}^{m_1}\frac{1}{w_{i_1}-z_1}...\prod_{i=1}^{m_N}\frac{1}{w_{i_N}-z_N}\right).
    \end{align}
Then the column vector $\widehat\psi$ containing all $\phi^{(\bold{m})}$ will satisfy the following KZ equation
\begin{align}
   -b^2 \partial_i  \widehat\psi=A_i\widehat\psi+\sum_{j\neq i}\frac{\widehat{\Omega}^{T}_{ij}}{z_i-z_j}\widehat\psi,
\end{align}
where $\widehat{\Omega}^{T}$ is the transpose of the $\widehat{\Omega}$ matrix of height  $m$ KZ equation. And the action of $A_i$ is given by
\begin{align}
   A_i \phi^{(\bold{m})}=-\Lambda m_i  \phi^{(\bold{m})}.
\end{align}
Then the differential equation that the full conformal block  $\psi=\phi_0(z)\phi^{(\bold{m})}$ satisfies is the following:
\begin{align}\label{TmKZ}
    -b^2 \partial_i  \psi=\frac{\Lambda H_i}{2}\psi+\sum_{j\neq i}\frac{\Omega^{T}_{ij}}{z_i-z_j}\psi,
\end{align}
where 
\begin{align}\label{Hi}
    H_i \phi^{(\bold{m})}=(k_i-2m_i) \phi^{(\bold{m})}.
\end{align}

Equivalently, we can convert the above relations to height $m$ KZ equations. Let $v=v_1\otimes v_2\otimes...\otimes v_N$ be the highest weight vector as in \eqref{hiwei}. Define
\begin{align}
  F(z,w)=\sum_i T_i (z,w)=\exp(\frac{\Lambda w}{b^2})\prod_{j} (w-z_j)^\frac{k_j}{b^2} \sum_i \frac{Y_i}{w-z_i}=\phi_1(z,w)\sum_i \frac{Y_i}{w-z_i},
\end{align}
and\begin{align}
    \Psi&=\phi_0(z)\int \prod_{i\leq j} (w_i-w_j)^{\frac{-2}{b^2}}F(z,w_1)F(z,w_2)....F(z,w_N) v\, d^m \bold{w}\nonumber\\
    &=\phi_0(z)\int \rho(\bold{w})F(z,w_1)F(z,w_2)....F(z,w_m) v\, d^m \bold{w}\nonumber\\
    &=\phi_0(z) \int d^m \bold{w} \omega_{\bold{z,w}} v
\end{align}
Then \eqref{TmKZ} is equivalent to 
\begin{align}\label{mKZ}
   -b^2 \partial_i \Psi=\left(\frac{\Lambda}{2} H_i+\sum_{j\neq i}\frac{\Omega_{ij}}{z_i-z_j}\right)\Psi.
\end{align}
One can see that $\phi^{(\bold{m})}$ is the coefficient of the basis vector
    \begin{align*}
       Y_1^{m_1}Y_2^{m_2}...Y_N^{m_N}v
    \end{align*}
in the expansion of $\Psi$, which explains the transpose. Also, the $H_i$ matrix in \eqref{TmKZ} is the same as the action of the $H_i$ operator on $Y_1^{m_1}Y_2^{m_2}...Y_N^{m_N}v$, since by commutation relations we have
\begin{align}
    H_i Y_i^{m_i}v=(k_i-2m_i)Y_i^{m_i}v.
\end{align}
\end{theorem}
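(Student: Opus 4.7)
The plan is to generalize the strategy used in Theorem \ref{th:height2} from height $m=2$ to arbitrary height $m$. The proof proceeds by direct computation of $\partial_i \phi^{(\mathbf{m})}$, followed by integration by parts to eliminate the double poles that appear, and finally by matching the resulting coefficients against the matrix elements of $\widehat{\Omega}^{T}_{ij}$.

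First I would differentiate $\phi^{(\mathbf{m})}$ under the integral sign. The operator $\partial/\partial z_i$ acts on two types of factors: it hits the Coulomb-gas piece $(w_a-z_i)^{k_i/b^2}$ inside $A(\mathbf{z},\mathbf{w})$, producing $-\frac{k_i}{b^2}\sum_a \frac{1}{w_a-z_i}A$; and it hits the rational factors $\prod_{i=1}^{m_i}(w_{i_i}-z_i)^{-1}$ in the definition of $\phi^{(\mathbf{m})}$, producing second-order poles $(w_{i_i}-z_i)^{-2}$. All resulting integrands are rational functions of degree $(m+1)$ in the $w$-poles.

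Second I would eliminate the double poles by integration by parts, writing $(w_a-z_i)^{-2}=-\partial_{w_a}(w_a-z_i)^{-1}$ and transferring the $w_a$-derivative onto the rest of the integrand. Using
\begin{equation}
\partial_{w_a}\log A \;=\; \frac{\Lambda}{b^2}\;+\;\sum_j \frac{k_j/b^2}{w_a-z_j}\;-\;\sum_{b\neq a}\frac{2/b^2}{w_a-w_b},
\end{equation}
one obtains three classes of terms. The $\Lambda/b^2$ term, summed over the $m_i$ factors $(w_{i_i}-z_i)^{-1}$, yields the diagonal contribution $A_i\phi^{(\mathbf{m})}=-\Lambda m_i\phi^{(\mathbf{m})}$. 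The $k_j/b^2(w_a-z_j)^{-1}$ term combines with the pre-existing pole at $w_a-z_i$ through the partial fraction identity $\frac{1}{(w_a-z_i)(w_a-z_j)}=\frac{1}{z_i-z_j}\bigl(\frac{1}{w_a-z_i}-\frac{1}{w_a-z_j}\bigr)$, producing exactly the $1/(z_i-z_j)$ structure of the KZ equation with multiplicities $\mathbf{m}'$ that either equal $\mathbf{m}$ (diagonal) or are shifted by $m_i\to m_i-1,\,m_j\to m_j+1$ (off-diagonal). The $(w_a-w_b)^{-1}$ terms must cancel after antisymmetrizing over the pair $(a,b)$; this is the familiar cancellation underlying the closure of the screened integrand, and follows from symmetrizing the partial fraction $\frac{1}{(w_a-z_i)(w_a-w_b)}+\frac{1}{(w_b-z_i)(w_b-w_a)}=\frac{1}{(w_a-z_i)(w_b-z_i)}$, which leaves only contributions already captured in the $z$-pole sector.

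Third, I would match the resulting combination of $\phi^{(\mathbf{m}')}$'s against the matrix elements of $\widehat{\Omega}_{ij}$ acting on $Y_1^{m_1}\cdots Y_N^{m_N}v$. The relevant algebraic input is the generalized commutation relation
\begin{equation}
X\, Y^n - Y^n X \;=\; n\,Y^{n-1} H \;-\; n(n-1)\,Y^{n-1},
\end{equation}
which extends the identity $XY^2-Y^2X=HY+YH$ used in the height-two case \eqref{abaa}. Substituting into the defining formula $\Omega_{ij}=\tfrac12 H_iH_j+X_iY_j+Y_iX_j$ gives the action on arbitrary tensor monomials; its combinatorial coefficients match precisely the coefficients produced by the integration-by-parts procedure above. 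The transpose on the matrix arises because $\phi^{(\mathbf{m})}$ is the coefficient of the monomial $Y_1^{m_1}\cdots Y_N^{m_N}v$ in the expansion of $\Psi$, so contraction against $\widehat{\Omega}_{ij}$ on the monomials turns into $\widehat{\Omega}^T_{ij}$ on the coefficients. The equivalence with the rescaled KZ equation \eqref{TmKZ} and \eqref{mKZ} follows from the same multiplication by $\phi_0(z)$ as in the height-one computation at the end of Section \ref{sec:heights}, which converts $A_i$ and $\widehat{\Omega}^T_{ij}$ into $(\Lambda/2)H_i$ and $\Omega^T_{ij}$.

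The main obstacle is the combinatorial bookkeeping when several $w$-variables cluster near the same $z_i$ (i.e., $m_i\geq 2$). In this regime the integration by parts at one pole can interact with the other poles at $w_b-z_i$, producing higher-order structures that must be reduced iteratively. A clean organization is to split the sum over $w_a$'s into those near $z_i$ and those near other $z_j$'s, to perform the integration by parts one pole at a time, and at each step check that the only new structures generated are first-order poles at the $z$'s consistent with the stated height $m$. Once this bookkeeping is controlled, the identification with $\widehat{\Omega}^T_{ij}$ becomes a termwise match between the two sides.
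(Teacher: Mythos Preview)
Your approach is correct in outline, but takes a genuinely different route from the paper. You propose to generalize the component-wise computation of Theorem~\ref{th:height2} from $m=2$ to arbitrary $m$: differentiate $\phi^{(\mathbf{m})}$, integrate by parts to kill double poles, reduce via partial fractions, and match against the matrix elements of $\widehat{\Omega}_{ij}$ using $[X,Y^n]=nY^{n-1}H-n(n-1)Y^{n-1}$. The paper instead proves the \emph{operator version} \eqref{mKZ} directly, working with the vector $\Psi=\phi_0\int\rho(\mathbf{w})F(z,w_1)\cdots F(z,w_m)\,v\,d^m\mathbf{w}$. The core is the single commutator identity \eqref{lem1},
\[
-b^2\partial_i F-\Bigl[\tfrac{\Lambda}{2}H_i+\textstyle\sum_{j\neq i}\frac{\Omega_{ij}}{z_i-z_j},\,F\Bigr]
\;=\;F\,\frac{k_i-H_i}{w-z_i}\;-\;T_i\sum_j\frac{k_j-H_j}{w-z_j}\;+\;b^2\,\partial_w T_i,
\]
which is applied to each factor $F(z,w_p)$ in turn. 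The three terms on the right give three summands; the $\partial_{w_p}T_i$ term is integrated by parts against $\rho(\mathbf{w})$, and the operator terms $(k_i-H_i)$ and $(k_j-H_j)$ are commuted through the remaining $F$'s using $[k_i-H_i,F]=2T_i$. The resulting cancellation is then a short algebraic identity among the rational functions $\frac{1}{(w_p-w_n)(w_p-z_i)}$, $\frac{1}{(w_p-z_i)(w_n-z_j)}$, etc., handled uniformly in $m$. Your component approach keeps the connection to the scalar integrals $\phi^{(\mathbf{m})}$ explicit and flows naturally from the height-two case, at the cost of the combinatorial bookkeeping you flag for $m_i\geq 2$. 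The paper's operator approach packages all multi-indices $\mathbf{m}$ simultaneously, so that bookkeeping is replaced by the commutator lemma; the transpose then drops out automatically because the $\phi^{(\mathbf{m})}$ are the coefficients of $\Psi$ in the monomial basis $Y_1^{m_1}\cdots Y_N^{m_N}v$.
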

\begin{proof}
We now prove the operator version of the theorem. First, we notice that $\phi_0(z) v$ will satisfy \eqref{mKZ} from definition, since in this case $\Omega_{ij}$ acts on $v$ by scalar multiplication, and $H_i v=k_i v$. To prove \eqref{mKZ}, we first consider the following 
\begin{align}
    -b^2\partial_i F=F\frac{k_i}{w-z_i}+b^2\frac{\partial}{\partial w}T_i-\Lambda T_i-T_i\sum_j\frac{k_j}{w-z_j} ,
\end{align}
and
\begin{align}
    &\left[ \sum_{j\neq i}\frac{\Omega_{ij}}{z_i-z_j},F(z,w)\right]=F\frac{H_i}{w-z_i}-T_i\sum_j\frac{H_j}{w-z_j} ,\\
    &\left[ \frac{\Lambda}{2}H_i,F(z,w)\right]=-\Lambda T_i.
\end{align}
One gets 
\begin{align}\label{lem1}
      -b^2\partial_i F(z,w)-  \left[\frac{\Lambda}{2}H_i+\sum_{j\neq i}\frac{\Omega_{ij}}{z_i-z_j},F(z,w) \right]=F\frac{k_i-H_i}{w-z_i}-T_i\sum_{j}\frac{k_j-H_j}{w-z_j}+b^2\frac{\partial}{\partial w}T_i.
    \end{align}
In order to prove \eqref{mKZ}, we just need to prove that
\begin{align}
\int\left[D_i,\omega_{\bold{z,w}}\right]=\int\left[-b^2\partial_i-(\frac{\Lambda}{2}H_i+\sum_{j\neq i}\frac{\Omega_{ij}}{z_i-z_j}),\omega_{\bold{z,w}}\right]=0.
\end{align}
First, one has
\begin{align}
    \int\left[D_i,\omega_{\bold{z,w}}\right]&=\sum_{p=1}^m \int\rho(\bold{w})F(z,w_1)...\left[D_i,F(z,w_p)\right]...F(z,w_m)\nonumber\\
    &=\sum_{p=1}^m \int\rho(\bold{w})F(z,w_1)...\left(b^2\frac{\partial}{\partial w_p}T_i(z,w_p)-T_i(z,w_p)\sum_j\frac{k_j-H_j}{w_p-z_j}+F(z,w_p)\frac{k_i-H_i}{w_p-z_i}\right)...
\end{align}
For the first summand, one gets
\begin{align}
    &\sum_{p=1}^m \int\rho(\bold{w})F(z,w_1)...\left(b^2\frac{\partial}{\partial w_p}T_i(z,w_p)\right)...F(z,w_m) d^m \bold{w}\nonumber\\
    &=-\sum_{p=1}^m b^2\int\frac{\partial\rho(\bold{w})}{\partial w_p})F(z,w_1)...T_i(z,w_p)...F(z,w_m) d^m \bold{w}\nonumber\\
    &=\sum_{p,n:n\neq p}\int\frac{2\rho(\bold{w})}{w_p-w_n}(\prod_{l\neq n,p}F(w_l,z)) \phi_1(w_p,z)\phi_1(w_n,z)\left(\frac{Y_i}{(w_p-w_n)(w_p-z_i)}\right.\nonumber\\
    &\left.*\sum_j\frac{Y_j}{w_n-z_j} \right)v\,d^m \bold{w}
\end{align}
Similarly, by 
\begin{align}
    \left[k_i-H_i,Y(z,w)\right]= 2T_i(z,w),\left[\sum_j\frac{k_j-H_j}{w-z_j},Y(z,\tau)\right]= 2\sum_j\frac{T_j(z,\tau)}{w-z_j}
\end{align}
the second and third summands simplify to
\begin{align}
    &-2\sum_{n,p:n>p}\int \rho(\bold{w})F(z,w_1)...T_i(z,w_p)...(\sum_j \frac{T_j(z,w_n)}{w_p-w_j})...F(z,w_m)v\,d^m \bold{w}\nonumber\\
    &=-2\sum_{n,p:n>p}\int \rho(\bold{w})(\prod_{l\neq n,p}F(w_l,z)) \phi_1(w_p,z)\phi_1(w_n,z) \frac{Y_i}{w_p-z_i}\sum_j \frac{Y_j}{(w_n-z_j)(w_p-z_j)}v\,d^m \bold{w}
\end{align}
and
\begin{align}
    &2\sum_{n,p:n>p}\int \rho(\bold{w})F(z,w_1)...( \frac{T_i(z,w_n)}{w_p-z_i})...F(z,w_m)v\,d^m \bold{w}\nonumber\\
    &=2\sum_{n,p:n>p}\int \rho(\bold{w})(\prod_{l\neq n,p}F(w_l,z)) \phi_1(w_p,z)\phi_1(w_n,z) \frac{Y_i}{(w_p-z_i)(w_n-z_i)}\sum_j \frac{Y_j}{(w_p-z_j)}v\,d^m \bold{w}
\end{align}
And we have 
\begin{align}
    &\sum_{p,n:n\neq p}\frac{Y_i}{(w_p-w_n)(w_p-z_i)}\sum_j \frac{Y_j}{(w_n-z_j)}-\sum_{n,p:n>p}\frac{Y_i}{w_p-z_i}\sum_j \frac{Y_j}{(w_n-z_j)(w_p-z_j)}\nonumber\\
    &+\sum_{n,p:n>p}\frac{Y_i}{(w_p-z_i)(w_n-z_i)}\sum_j \frac{Y_j}{(w_p-z_j)},
\end{align}
which vanishes since
\begin{align}
    &\sum_{p,n:n\neq p}\frac{Y_i}{(w_p-w_n)(w_p-z_i)}\sum_j \frac{Y_j}{(w_n-z_j)}=\sum_{p,n:n>p}\frac{Y_i}{(w_p-w_n)(w_p-z_i)}\sum_j\frac{Y_j}{w_n-z_j}\nonumber\\
    &-\sum_{p,n:n>p}\frac{Y_i}{(w_p-w_n)(w_n-z_i)}\sum_j\frac{Y_j}{w_p-z_j}.
\end{align}
Thus we proved the irregular version of the KZ equations.
\end{proof}

\section{Flatness and the space at infinity}
In this section, we want to analyze conditions under which irregular KZ equations of arbitrary degree satisfy the flatness condition. To this end, we will start with an irregular operator at finite distance and subsequently send the operator to infinity. 

\subsection{Degree $1$}

The KZ equation  \eqref{height1kz} defines a flat connection. To see this, we define
\begin{equation}
    \Omega_i \equiv \sum_{j \neq i} \frac{\Omega_{ij}}{z_i - z_j},
\end{equation}
and note that 
\begin{equation} \label{eq:OmegaHcom}
    [\Omega_i, H_i] = \left[\sum_{j \neq i} \frac{\Omega_{ij}}{z_i-z_j}, H_i\right] = \sum_{j \neq i} \frac{2}{z_i - z_j} (Y_i X_j - X_i Y_j).
\end{equation}
Then, for the connection
\begin{equation}
    \nabla_i \equiv \frac{\partial}{\partial z_i} - \frac{\Lambda}{2} H_i - \sum_{j \neq i} \frac{\Omega_{ij}}{z_i -z_j}, 
\end{equation}
we compute the commutator
\begin{align}
    [\nabla_i, \nabla_j] &= \left[\partial_i - \frac{\Lambda}{2} H_i - \Omega_i, \partial_j - \frac{\Lambda}{2} H_j - \Omega_j\right] \nonumber \\
    ~ &= \left[\partial_i -\Omega_i, \partial_j - \Omega_j\right] + \left[\frac{\Lambda}{2}H_i, \Omega_j\right] - \left[\frac{\Lambda}{2} H_j, \Omega_i\right] \nonumber \\
    &= \sum_{j \neq i} \frac{\Lambda}{z_i - z_j}(Y_i X_j - X_i Yj + Y_j X_i - X_j Y_i) \nonumber \\
    &= 0,
\end{align}
by using \eqref{eq:OmegaHcom} and the fact that the connection is flat for $\Lambda = 0$.

\subsection{Degree $r > 1$}

The degree $2$ irregular KZ equation given in equation \eqref{eq:r2KZ} can be extended to arbitrary degree by using degree $r$ irregular Kac-Moody modules at infinity. The general form of such an equation will be as follows:
\begin{equation} \label{eq:RankriKZi}
    \kappa \frac{\partial \psi}{\partial z_i} = \left(\sum_{l=0}^{r-1} A_i^{(l)} z_i^l + \sum_{j \neq i} \frac{\Omega_{ij}}{z_i - z_j}\right).
\end{equation}

In order to see under what conditions this defines a flat connection, we follow the approach of \cite{Resh-KZ} and start with an irregular singularity at the finite distance point $z_1$. The corresponding covariant derivative is given by
\begin{equation} \label{eq:RankriKZ}
    \nabla_i =  \frac{\partial}{\partial z_i} - \sum_{j \neq i,1} \frac{\Omega_{ij}}{z_i - z_j} - \sum_{l=1}^r \frac{\Omega_{i1}^{(l)}}{(z_i - z_1)^{l+1}},
\end{equation}
where $\Omega^{(l)} \in \mathfrak{g}^{(l)} \otimes \mathfrak{g}$ with 
\begin{equation}
    [ \mathfrak{g}^{(l)}, \mathfrak{g}^{(k)}] \in \mathfrak{g}^{(l+k)}.
\end{equation}
$\mathfrak{g}^{(l)}$ is the degree $l$ subspace of the affine Lie algebra $\widehat{\mathfrak{g}}$ and elements of it are sometimes denotes as $X t^l$ with $X \in \mathfrak{g}$. One can easily check that the connection \eqref{eq:RankriKZ} satisfies the flatness condition if we impose
\begin{equation}
    [\Omega_{ij}, \Omega_{1i}^{(l)} + \Omega_{1j}^{(l)}] = 0.
\end{equation}
Next, we want to perform a change of variables
\begin{equation}
    w_i \equiv \frac{1}{z_i-z_1},
\end{equation}
giving
\begin{equation}
    \nabla_i = - w_i^2 \frac{\partial}{\partial w_i} - \sum_{j \neq i, 1} \frac{\Omega_{ij}}{w_i^{-1} - w_j^{-1}} - \sum_{l=1}^r \Omega_{i1}^{(l)} w_i^{l+1}.
\end{equation}
Sending $z_1 \rightarrow \infty$, we get (using $\frac{w_j}{w_i} \rightarrow 1$),
\begin{equation}
    -\frac{1}{w_i^2} \nabla_i = \frac{\partial}{\partial w_i} - \sum_{j \neq i,1} \frac{\Omega_{ij}}{w_i - w_j} - \sum_{l=1}^r \Omega_{i1}^{(l)} w_i^{l-1}.
\end{equation}
By abuse of notation, we see that 
\begin{equation}
    \nabla_i = \frac{\partial}{\partial z_i} - \sum_{j \neq i} \frac{\Omega_{ij}}{z_i - z_j} - \sum_{l=1}^r \Omega_{\infty i}^{(l)} z_i^{l-1},
\end{equation}
defines a flat connection. Comparing to \eqref{eq:RankriKZi} and identifying $A_i^{(l)} = \Omega_{\infty i}^{(l+1)}$, we see that we can satisfy flatness if we allow $A_i^{(l)} \in \mathfrak{g}^{(l)} \otimes \mathfrak{g}$ such that $\mathfrak{g}^{(l)}$ acts at infinity. In particular, all currents $\alpha(z)$, $\beta(z)$ and $\gamma(z)$ must act irregularly on the module at infinity. Thus the highest weight vacuum vector has to be extended to include the representation at infinity as follows
\begin{equation}
    v_{\infty} \otimes v_1 \otimes \cdots \otimes v_n,
\end{equation}
where $v_{\infty}$ is now the vector 
\begin{equation}
    v_{\infty} \equiv \bigoplus_{l=0}^{r-1} v_{\infty}^{(l)}.
\end{equation}
This has as a consequence that now height one representations are $n+r$ dimensional due to the action of $Y^{(l)}$ at infinity.

\section{Applications}

\subsection{Surface operators in Argyres-Douglas theories}

The space of conformal blocks in Liouville theory can be identified with the Hilbert space of the corresponding 4d theory on $\mathbb{R} \times S^3$ in the Omega-background \cite{Nekrasov:2010ka}. In the present paper we considered the space of conformal blocks of $N$ degenerate fields on $\mathbb{C}$ with one irregular operator at infinity. Equivalently, via a projective compactification, this can be described as Liouville theory on $C = \mathbb{P}^1$ with $N$ degenerate fields and one irregular operator. In four-dimensional theory, the irregular operator is part of the definition of the theory, in partcular, our present setup describes an Argyres-Douglas theory \cite{Argyres:1995jj,Gaiotto:2009ma,Gaiotto:2012sf}, whereas degenerate fields correspond to surface operators supported on two-dimensional subspace of the 4d space-time \cite{Alday:2009fs}. Specifically, in our setup the surface operators are supported on $\mathbb{R} \times S^1$, where $S^1 \subset S^3$ is the great circle.

Therefore, monodromies of the KZ flat connection studied in this paper describe braiding of surface operators in the $\mathbb{R} \times C$ part of the space-time, where they look like lines. A convenient way to see this is via a six-dimensional perspective of the $(0,2)$ fivebrane theory on $C \times \mathbb{R} \times S^3$. In the six-dimensional setup the degenerate fields correspond to codimension-4 defects whereas the irregular operator is represented by a codimension-2 defect supported on all of $\mathbb{R} \times S^3$. The function $\mathcal{W}$ in \eqref{eq:Wirr} is the twisted superpotential of the 2d effective theory on surface operators. This perspective is consistent with 3d-3d correspondence that involves partial topological twist along the $\mathbb{R} \times C$ part of the space-time. One can also consider conformal blocks in ``half of Liouville theory'' that breaks the symmetry between $b^2$ and $b^{-2}$; this can be achieved by cutting $S^3$ into two copies of $S^1 \times D^2$ glued along a 2-torus $T^2$. Note compactification on this 2-torus results in 4d $\mathcal{N}=4$ super-Yang-Mills on $\mathbb{R} \times I \times C$ which, when further compactified on $C$, can be described by 2d sigma-model \cite{Harvey:1995tg,Bershadsky:1995vm} on a strip $\mathbb{R} \times I$ with target space $\mathcal{M}_H (SU(2), C)$, the space of Higgs bundles on $C$ with tame and wild ramification.

Also note that, if instead of $N$ degenerate fields we considered $N$ copies of generic Liouville operators (with momenta $\alpha_i$), then instead of braiding of surface operators the monodromy of the KZ connection would correspond to a domain wall in 4d theory, namely a duality wall. More generally, the mapping class group of $C$ acts as a duality group on 4d theory and monodromies associated with braiding codimension-2 defects represent duality walls. Below we offer another perspective on these monodromies.

\subsection{Rozansky-Witten theories}

Once we are discussing a higher-dimensional QFT perspective on the KZ monodromy involving irregular operators, it is instructive to make a comparison with the category $\text{MTC} [C \times S^1]$ of line operators --- in particular, their braiding --- in a Rozansky-Witten theory based on a hyper-Kahler target space $\mathcal{M}_H (SU(2), C)$. The latter is obtained by a topological twist \cite{Rozansky:1996bq} of 3d $\mathcal{N}=4$ theory obtained via compactification of 6d $(0,2)$ theory of type $A_1$ on $C \times S^1$.

The above mentioned duality relations allow us to identify $\text{MTC} [C \times S^1]$ with the (suitable version of) the derived category of coherent sheaves $D^b (\mathcal{M}_H (SU(2), C))$ \cite{Dedushenko:2018bpp,Gukov:2024adb}. Note, the rank of the Grothendieck group of $D^b (\mathcal{M}_H (SU(2), C))$ grows as $2^N$, where $N$ is the number of tame ramification points. This agrees with the dimension of the space of conformal blocks in section \ref{sec:heights}. Indeed, codimension-4 defects in 6d that represent degenerate fields on $C$ correspond to line operators of the Rozansky-Witten theory. One crucial difference is that the setup relevant to Liouville theory involves Omega-background, whereas the one relevant to the Rozansky-Witten theory does not. As a result, global monodromies in the latter system, classified by the group $\text{Auteq} \, D^b (\mathcal{M}_H (SU(2), C))$, are independent of the parameter $b$. Therefore, the global monodromies in the Liouville theory with irregular and degenerate fields can be considered as deformations of the global monodromies in this Rozansky-Witten setup.

We can also justify the relation between the two systems via 4d $\mathcal{N} = 4$ super-Yang-Mills obtained by first compactifying on the 2-torus $T^2$. In both cases, one finds 4d theory on $\mathbb{R} \times I \times C$, and the only difference between the two systems is the choice of the boundary conditions $\mathcal{B}_1$ and $\mathcal{B}_2$ on the two sides of the strip $\mathbb{R} \times I$. To the extent the two sets of boundary conditions can be considered deformations of one another, cf. \cite{Nekrasov:2010ka}, the same can be said about the global monodromies in the two systems. In both cases, further reduction on $C$ yields a sigma-model on $\mathbb{R} \times I$ with the target space $\mathcal{M}_H (SU(2), C)$.

\acknowledgments{It is our pleasure to thank Boris Feigin, Xia Gu, Pavel Putrov and Xiaomeng Xu for helpful discussions and suggestions.
The work of S.G. is supported by a Simons Collaboration Grant on New Structures in Low-Dimensional Topology, by the NSF grant DMS-2245099, and by the U.S. Department of Energy, Office of Science, Office of High Energy Physics, under Award No. DE-SC0011632.} The work of B.H. and Y.L. is supported by NSFC grant 12250610187. The work of N.R. was supported by the Simons Collaboration ``Categorical symmetries'',  by the grant BMSTC and ACZSP (Grant no. Z221100002722017), by the Changjiang fund, and by the project No 075-15-2024-631 funded by the ministery of Science and Higher Education of the Russian Federation.

\appendix

\section{Proof of Theorem \ref{th:height2}}\label{Proof1}
Here we give a proof of theorem \ref{th:height2}:
\begin{proof}
To prove (1), we just need to consider the following(we have the convention that $i \neq j,i\neq l, j\neq l$) :
\begin{align}
    \partial_l \phi^{(i,j)}&=\frac{-k_l}{b^2}\frac{k_i k_j}{2}\int A (\frac{1}{(w_1-z_i)(w_2-z_j)}+\frac{1}{(w_2-z_i)(w_1-z_j)})(\frac{1}{w_1-z_l}+\frac{1}{w_2-z_l})\nonumber\\
    &=\frac{-k_l}{b^2}\frac{k_i k_j}{2}\int A\left(\frac{1}{z_l-z_i}(\frac{1}{(w_1-z_l)(w_2-z_j)}-\frac{1}{(w_1-z_i)(w_2-z_j)})\right.\nonumber\\
    &+\frac{1}{z_l-z_j}(\frac{1}{(w_1-z_l)(w_2-z_i)}-\frac{1}{(w_1-z_j)(w_2-z_i)})\nonumber\\
    &+\frac{1}{z_l-z_j}(\frac{1}{(w_2-z_l)(w_1-z_i)}-\frac{1}{(w_2-z_j)(w_1-z_i)})\nonumber\\
    &+\left.\frac{1}{z_l-z_i}(\frac{1}{(w_2-z_l)(w_1-z_j)}-\frac{1}{(w_2-z_i)(w_1-z_j)}\right)\nonumber\\
    &=\frac{1}{b^2}\frac{1}{z_l-z_i}(k_l\phi^{(i,j)}-k_i\phi^{(l,j)})+\frac{1}{b^2}\frac{1}{z_l-z_j}(k_l\phi^{(i,j)}-k_j\phi^{(l,i)})\label{lij}.
    \end{align}
    \ref{lij} is equivalent to part of \ref{abad} and \ref{abcd} when $c\neq d$.And we also have
 \begin{align}
        \partial_j \phi^{(i,i)}&=-\frac{k_j}{b^2}k_i(k_i-1)\int A \frac{1}{(w_1-z_i)(w_2-z_i)}(\frac{1}{w_1-z_j}+\frac{1}{w_2-z_j})\nonumber\\
       &=\frac{1}{b^2}\frac{1}{z_j-z_i}(2k_j\phi^{(i,i)}-2(k_i-1)\phi^{(i,j)}). \label{jii}
 \end{align}
\ref{jii} is equivalent to part of \ref{abaa} and  \ref{abcd} when $c=d$.And

\begin{align}
    \partial_i \phi^{(i,i)}&=-k_i(k_i-1)(\frac{k_1}{b^2}-1)\int A(\frac{1}{(w_1-z_i)^2(w_2-z_i)}+\frac{1}{(w_1-z_i)(w_2-z_i)^2}),\nonumber\\
    \int A\frac{1}{(w_1-z_i)^2(w_2-z_i)}&=\int dw_2 \int (\frac{k_i}{b^2}-1)^{-1} A_{w_1,z_i}\frac{1}{w_2-z_i}d((w_1-z_i)^{(\frac{k_i}{b^2}-1)})\nonumber\\
    &=-(\frac{k_i}{b^2}-1)^{-1}\int \frac{A}{(w_2-z_i)(w_1-z_i)}(-\frac{2}{b^2}\frac{1}{w_1-w_2}+\sum_{j\neq i}\frac{k_j}{b^2}\frac{1}{w_1-z_j}),\nonumber
    \end{align}
\begin{align}
    &\int A(\frac{1}{(w_1-z_i)^2(w_2-z_i)}+\frac{1}{(w_2-z_1)^2(w_1-z_i)})\nonumber\\
    &=-(\frac{k_i}{b^2}-1)^{-1}\int A\sum_{j\neq i}\frac{k_j}{b^2}(\frac{1}{(w_2-z_i)(w_1-z_i)}(\frac{1}{w_1-z_j}+\frac{1}{w_2-z_j}),
\end{align}
    thus
\begin{align}
   \partial_i \phi^{(i,i)}&=k_i(k_i-1)\int A\sum_{j\neq i}\frac{k_j}{b^2}(\frac{1}{(w_2-z_i)(w_1-z_i)}(\frac{1}{w_1-z_j}+\frac{1}{w_2-z_j})\nonumber\\
   &=\frac{1}{b^2}\sum_{j\neq i}\frac{1}{z_i-z_j}(2k_j\phi^{(i,i)}-2(k_i-1)\phi^{(i,j)}). \label{iii}
\end{align}
\ref{iii} with \ref{jii} together is equivalent to \ref{abaa}. Finally
\begin{align}
\partial_i \phi^{(i,j)}=\frac{k_i k_j}{2}\partial_i(\int A(\frac{1}{(w_1-z_j)(w_2-z_i)}+\frac{1}{(w_2-z_j)(w_1-z_i)}))
\end{align}

And:
\begin{align}
    &\frac{\partial}{\partial z_i}\int A(\frac{1}{(w_1-z_j)(w_2-z_i)})\nonumber\\
    &=-\int A\frac{1}{w_1-z_j}((\frac{k_i}{b^2}-1)\frac{1}{(w_2-z_i)^2}+\frac{k_i}{b^2}\frac{1}{(w_2-z_i)(w_1-z_i)})\nonumber\\
    &=-(\frac{k_i}{b^2}-1)\int A\frac{1}{(w_1-z_j)(w_2-z_i)^2}-\frac{k_i}{b^2}\int A\frac{1}{(w_1-z_j)(w_2-z_i)(w_1-z_i)}\nonumber\\
    &=-(\frac{k_i}{b^2}-1)\int A\frac{1}{(w_1-z_j)(w_2-z_i)^2}-\frac{k_i}{b^2}\frac{1}{z_i-z_j}\int A\frac{1}{(w_2-z_i)}(\frac{1}{(w_1-z_i)}-\frac{1}{(w_1-z_j)})\nonumber\\
    &=-\frac{k_i}{b^2}\frac{1}{z_i-z_j}\int A\frac{1}{(w_2-z_i)}(\frac{1}{(w_1-z_i)}-\frac{1}{(w_1-z_j)})-\int dw_1 \int \frac{A_{w_2,z_i}}{(w
_1-z_j)}d((w_2-z_i)^{\frac{k_i}{b^2}-1})\nonumber\\
&=...+\int A \frac{1}{(w_1-z_j)(w_2-z_i)}(\frac{2}{b^2}\frac{1}{w_1-w_2}+\sum_{l\neq i}\frac{k_l}{b^2}\frac{1}{w_2-z_l})\nonumber\\
&=...+\frac{2}{b^2}\int A \frac{1}{(w_1-z_j)(w_2-z_i)(w_1-w_2)}+\sum_{l\neq i}\frac{k_l}{b^2}\int A \frac{1}{(w_1-z_j)(w_2-z_i)(w_2-z_l)}\nonumber\\
&=-\frac{k_i}{b^2}...+\frac{2}{b^2}...+\sum_{l\neq i}\frac{k_l}{b^2}\frac{1}{z_i-z_l}\int A(\frac{1}{(w_1-z_j)}(\frac{1}{w_2-z_i}-\frac{1}{w_2-z_l})).
\end{align}
And similarly
\begin{align}
    &\frac{\partial}{\partial z_i}\int A(\frac{1}{(w_2-z_j)(w_1-z_i)})\nonumber\\
    &=-\frac{k_i}{b^2}\frac{1}{z_i-z_j}\int A\frac{1}{(w_1-z_i)}(\frac{1}{(w_2-z_i)}-\frac{1}{(w_2-z_j)})-\frac{2}{b^2}\int A \frac{1}{(w_2-z_j)(w_1-z_i)(w_1-w_2)}\nonumber\\
    &+\sum_{l\neq i}\frac{1}{z_i-z_l}\frac{k_l}{b^2}\int A(\frac{1}{(w_2-z_j)}(\frac{1}{w_1-z_i}-\frac{1}{w_1-z_l})).
\end{align}
Notice that
\begin{align}
    &\int A (\frac{1}{(w_1-z_j)(w_2-z_i)(w_1-w_2)}-\frac{1}{(w_2-z_j)(w_1-z_i)(w_1-w_2)})\nonumber\\
    &=\int A \frac{1}{z_i-z_j}(\frac{1}{w_1-z_i}-\frac{1}{w_1-z_j})(\frac{1}{w_2-z_i}-\frac{1}{w_2-z_j})\nonumber\\
    &=\frac{1}{z_i-z_j}\int A(\frac{1}{(w_1-z_i)(w_2-z_i)}- (\frac{1}{(w_1-z_j)(w_2-z_i)}+\frac{1}{(w_2-z_j)(w_1-z_i)})+\frac{1}{(w_1-z_j)(w_2-z_j)})
\end{align}
Combining them together, we get:
\begin{align}
    b^2\partial_i \phi^{(i,j)}&=\frac{k_ik
_j}{2}\frac{1}{z_i-z_j}((-2k_i+2)\frac{1}{k_i(k_i-1)}\phi^{(i,i)}+(k_i+k_j-2)\frac{2}{k_ik_j}\phi^{(i,j)}+(-2k_j+2)\frac{1}{k_j(k_j-1)}\phi^{(j,j)})\nonumber\\
&+\frac{k_ik
_j}{2}\sum_{l\neq {i,j}}\frac{1}{z_i-z_l}k_l(\frac{2}{k_i k_j}\phi^{(i,j)}-\frac{2}{k_j k_l}\phi^{(j,l)})\nonumber\\
&=\frac{1}{z_i-z_j}(-k_j\phi^{(i,i)}+(k_i+k_j-2)\phi^{(i,j)}+k_i\phi^{(j,j)})+\sum_{l\neq {i,j}}\frac{1}{z_i-z_l}(k_l\phi^{(i,j)}-k_i\phi^{(j,l)}).\label{iij}
\end{align}
\ref{iij} together with \ref{lij} is equivalent to \ref{abab} and \ref{abad}. Thus (1) is proven.

Now we prove (2). For \ref{ii}, we have:
\begin{align}
    \frac{\partial^2 \phi}{\partial^2 z_i}=\int  A\left(\frac{k_i}{b^2}(\frac{k_i}{b^2}-1)\frac{1}{(w_1-z_i)^2}+2\frac{k_i}{b^2}\frac{k_i}{b^2}\frac{1}{(w_1-z_i)(w_2-z_i)}+\frac{k_i}{b^2}(\frac{k_i}{b^2}-1)\frac{1}{(w_2-z_i)^2} \right)
\end{align}
And 
\begin{align}
    &\int A\left(\frac{k_i}{b^2}(\frac{k_i}{b^2}-1)\frac{1}{(w_1-z_i)^2}\right)\nonumber\\
    &=\int dw_2 \int A_{w_1,z_i}*(\frac{k_i}{b^2}d((w_1-z_i)^{\left(\frac{k_i}{b^2}-1\right)}),\text{where $A_{w_1,z_i}=A*\frac{1}{(w_1-z_i)^{\frac{k_i}{b^2}}}$}\nonumber\\
    &=-\frac{k_i}{b^2}\int dw_2 d(A_{w_1,z_i})(w_1-z_i)^{\left(\frac{k_i}{b^2}-1\right)}\nonumber\\
    &=-\frac{k_i}{b^2}\int  A (w_1-z_i)^{-1} (\frac{-2}{b^2}\frac{1}{(w_1-w_2)}+\sum_{j\neq i}\frac{k_i}{b^2}\frac{1}{(w_1-z_j)}+\frac{\Lambda}{b^2})\nonumber\\
    &=-\frac{k_i}{b^2}\int A\left(\frac{-2}{b^2}\frac{1}{(w_1-z_i)(w_1-w_2)}+\sum_{j\neq i}\frac{k_j}{b^2}\frac{1}{(w_1-z_i)(w_1-z_j)}+\frac{\Lambda}{b^2}\frac{1}{w_1-z_i}\right)\nonumber\\
    &=-\frac{k_i}{b^2}\int  A(\frac{-2}{b^2}\frac{1}{(w_1-w_2)(w_1-z_i)} )\nonumber\\
    &-\frac{k_i}{b^2}\int  A \sum_{j\neq i}\frac{k_j}{b^2}(\frac{1}{z_i-z_j}(\frac{1}{w_1-z_i}-\frac{1}{w_1-z_j}))-\frac{\Lambda
    }{b^2}\frac{k_i}{b^2}\int A \frac{1}{w-z_i}.
\end{align}
Similarly, we have
\begin{align}
     &\int A\left(\frac{k_i}{b^2}(\frac{k_i}{b^2}-1)\frac{1}{(w_2-z_i)^2}\right)\nonumber\\
     &=-\frac{k_i}{b^2}\int  A(\frac{2}{b^2}\frac{1}{(w_1-w_2)(w_2-z_i)} )\nonumber\\
    &-\frac{k_i}{b^2}\int A\sum_{j\neq i}\frac{k_j}{b^2}(\frac{1}{z_i-z_j}(\frac{1}{w_2-z_i}-\frac{1}{w_2-z_j}))-\frac{\Lambda}{b^2}\frac{k_i}{b^2}\int A \frac{1}{w_2-z_i}.
\end{align}
Thus
\begin{align}
    \frac{\partial^2 \phi}{\partial^2 z_i}&=-\frac{2}{b^4}\phi^{(i,i)}+\sum_{j\neq i}\frac{1}{z_i-z_j} (\frac{k_j}{b^2} \frac{\partial \phi}{\partial z_i}-\frac{k_i}{b^2} \frac{\partial \phi}{\partial z_j})+\frac{\Lambda}{b^2}\frac{\partial \phi}{\partial z_i}.
\end{align}
As for \ref{ij}, we have:
\begin{align}
    \frac{\partial^2 \phi}{\partial z_i\partial z_j}&=\frac{k_i k_j}{b
    ^4}\int A (\frac{1}{w_1-z_i}+\frac{1}{w_2-z_i}) (\frac{1}{w_1-z_j}+\frac{1}{w_2-z_j})\nonumber\\
    &=\frac{2}{b^4}\phi^{(i,j)}+\frac{k_ik_j}{b^4}\frac{1}{z_i-z_j}(\frac{1}{w_1-z_i}-\frac{1}{w_1-z_j}+\frac{1}{w_2-z_i}-\frac{1}{w_2-z_j})\nonumber\\
    &=\frac{2}{b^4}\phi^{(i,j)}+\frac{1}{z_i-z_j}(-\frac{k_j}{b^2}\frac{\partial \phi}{\partial z_i}+\frac{k_i}{b^2}\frac{\partial \phi}{\partial z_j})
\end{align}

which completes the theorem.
\end{proof}
\newpage
\bibliographystyle{JHEP}     
 {\small{\bibliography{main}}}

\providecommand{\href}[2]{#2}\begingroup\raggedright\begin{thebibliography}{10}

\bibitem{Belavin:1984vu}
A.~A. Belavin, A.~M. Polyakov and A.~B. Zamolodchikov, \emph{{Infinite Conformal Symmetry in Two-Dimensional Quantum Field Theory}}, \href{http://dx.doi.org/10.1016/0550-3213(84)90052-X}{\emph{Nucl. Phys. B} {\bf 241} (1984) 333--380}.

\bibitem{KNIZHNIK198483}
V.~Knizhnik and A.~Zamolodchikov, \emph{Current algebra and wess-zumino model in two dimensions}, \href{http://dx.doi.org/https://doi.org/10.1016/0550-3213(84)90374-2}{\emph{Nuclear Physics B} {\bf 247} (1984) 83--103}.

\bibitem{Witten:1988hf}
E.~Witten, \emph{{Quantum Field Theory and the Jones Polynomial}}, \href{http://dx.doi.org/10.1007/BF01217730}{\emph{Commun. Math. Phys.} {\bf 121} (1989) 351--399}.

\bibitem{Alday:2009aq}
L.~F. Alday, D.~Gaiotto and Y.~Tachikawa, \emph{{Liouville Correlation Functions from Four-dimensional Gauge Theories}}, \href{http://dx.doi.org/10.1007/s11005-010-0369-5}{\emph{Lett. Math. Phys.} {\bf 91} (2010) 167--197}, [\href{https://arxiv.org/abs/0906.3219}{{\tt 0906.3219}}].

\bibitem{Gaiotto:2009ma}
D.~Gaiotto, \emph{{Asymptotically free $\mathcal{N} = 2$ theories and irregular conformal blocks}}, \href{http://dx.doi.org/10.1088/1742-6596/462/1/012014}{\emph{J. Phys. Conf. Ser.} {\bf 462} (2013) 012014}, [\href{https://arxiv.org/abs/0908.0307}{{\tt 0908.0307}}].

\bibitem{Dijkgraaf:2009pc}
R.~Dijkgraaf and C.~Vafa, \emph{{Toda Theories, Matrix Models, Topological Strings, and N=2 Gauge Systems}},  \href{https://arxiv.org/abs/0909.2453}{{\tt 0909.2453}}.

\bibitem{Gaiotto:2011nm}
D.~Gaiotto and E.~Witten, \emph{{Knot Invariants from Four-Dimensional Gauge Theory}}, \href{http://dx.doi.org/10.4310/ATMP.2012.v16.n3.a5}{\emph{Adv. Theor. Math. Phys.} {\bf 16} (2012) 935--1086}, [\href{https://arxiv.org/abs/1106.4789}{{\tt 1106.4789}}].

\bibitem{Nagoya}
H.~{Nagoya}, \emph{{Hypergeometric solutions to Schr{\"o}dinger equations for the quantum Painlev{\'e} equations}}, \href{http://dx.doi.org/10.1063/1.3620412}{\emph{Journal of Mathematical Physics} {\bf 52} (Aug., 2011) 083509--083509}, [\href{https://arxiv.org/abs/1109.1645}{{\tt 1109.1645}}].

\bibitem{Gaiotto:2012sf}
D.~Gaiotto and J.~Teschner, \emph{{Irregular singularities in Liouville theory and Argyres-Douglas type gauge theories, I}}, \href{http://dx.doi.org/10.1007/JHEP12(2012)050}{\emph{JHEP} {\bf 12} (2012) 050}, [\href{https://arxiv.org/abs/1203.1052}{{\tt 1203.1052}}].

\bibitem{Nekrasov:2015wsu}
N.~Nekrasov, \emph{{BPS/CFT correspondence: non-perturbative Dyson-Schwinger equations and qq-characters}}, \href{http://dx.doi.org/10.1007/JHEP03(2016)181}{\emph{JHEP} {\bf 03} (2016) 181}, [\href{https://arxiv.org/abs/1512.05388}{{\tt 1512.05388}}].

\bibitem{Resh-KZ}
N.~Reshetikhin, \emph{The knizhnik-zamolodchikov system as a deformation of the isomonodromy problem}, \href{http://dx.doi.org/10.1007/BF00420750}{\emph{Letters in Mathematical Physics} {\bf 26} (1992) 167--177}.

\bibitem{Gaiotto:2009hg}
D.~Gaiotto, G.~W. Moore and A.~Neitzke, \emph{{Wall-crossing, Hitchin systems, and the WKB approximation}}, \href{http://dx.doi.org/10.1016/j.aim.2012.09.027}{\emph{Adv. Math.} {\bf 234} (2013) 239--403}, [\href{https://arxiv.org/abs/0907.3987}{{\tt 0907.3987}}].

\bibitem{Argyres:1995jj}
P.~C. Argyres and M.~R. Douglas, \emph{{New phenomena in SU(3) supersymmetric gauge theory}}, \href{http://dx.doi.org/10.1016/0550-3213(95)00281-V}{\emph{Nucl. Phys. B} {\bf 448} (1995) 93--126}, [\href{https://arxiv.org/abs/hep-th/9505062}{{\tt hep-th/9505062}}].

\bibitem{Bonelli:2016qwg}
G.~Bonelli, O.~Lisovyy, K.~Maruyoshi, A.~Sciarappa and A.~Tanzini, \emph{{On Painlev\'e/gauge theory correspondence}}, \href{http://dx.doi.org/10.1007/s11005-017-0983-6}{\emph{Lett. Matth. Phys.} {\bf 107} (2017) 2359--2413}, [\href{https://arxiv.org/abs/1612.06235}{{\tt 1612.06235}}].

\bibitem{Jimbo_2008}
M.~Jimbo, H.~Nagoya and J.~Sun, \emph{Remarks on the confluent kz equation for and quantum painlevé equations}, \href{http://dx.doi.org/10.1088/1751-8113/41/17/175205}{\emph{Journal of Physics A: Mathematical and Theoretical} {\bf 41} (apr, 2008) 175205}.

\bibitem{Awata:2010bz}
H.~Awata, H.~Fuji, H.~Kanno, M.~Manabe and Y.~Yamada, \emph{{Localization with a Surface Operator, Irregular Conformal Blocks and Open Topological String}}, \href{http://dx.doi.org/10.4310/ATMP.2012.v16.n3.a1}{\emph{Adv. Theor. Math. Phys.} {\bf 16} (2012) 725--804}, [\href{https://arxiv.org/abs/1008.0574}{{\tt 1008.0574}}].

\bibitem{Nagoya_2010}
H.~Nagoya and J.~Sun, \emph{Confluent primary fields in the conformal field theory}, \href{http://dx.doi.org/10.1088/1751-8113/43/46/465203}{\emph{Journal of Physics A: Mathematical and Theoretical} {\bf 43} (Oct., 2010) 465203}.

\bibitem{Nishinaka:2012kn}
T.~Nishinaka and C.~Rim, \emph{{Matrix models for irregular conformal blocks and Argyres-Douglas theories}}, \href{http://dx.doi.org/10.1007/JHEP10(2012)138}{\emph{JHEP} {\bf 10} (2012) 138}, [\href{https://arxiv.org/abs/1207.4480}{{\tt 1207.4480}}].

\bibitem{Rim:2012tf}
C.~Rim, \emph{{Irregular conformal block and its matrix model}},  \href{https://arxiv.org/abs/1210.7925}{{\tt 1210.7925}}.

\bibitem{Gaiotto:2013rk}
D.~Gaiotto and J.~Lamy-Poirier, \emph{{Irregular Singularities in the $H_3^+$ WZW Model}},  \href{https://arxiv.org/abs/1301.5342}{{\tt 1301.5342}}.

\bibitem{Choi:2014qha}
S.~K. Choi, C.~Rim and H.~Zhang, \emph{{Virasoro irregular conformal block and beta deformed random matrix model}}, \href{http://dx.doi.org/10.1016/j.physletb.2015.01.003}{\emph{Phys. Lett. B} {\bf 742} (2015) 50--54}, [\href{https://arxiv.org/abs/1411.4453}{{\tt 1411.4453}}].

\bibitem{Nagoya:2015cja}
H.~Nagoya, \emph{{Irregular conformal blocks, with an application to the fifth and fourth Painlev\'e equations}}, \href{http://dx.doi.org/10.1063/1.4937760}{\emph{J. Math. Phys.} {\bf 56} (2015) 123505}, [\href{https://arxiv.org/abs/1505.02398}{{\tt 1505.02398}}].

\bibitem{Ashok:2016yxz}
S.~K. Ashok, D.~P. Jatkar, R.~R. John, M.~Raman and J.~Troost, \emph{{Exact WKB analysis of $ \mathcal{N} $ = 2 gauge theories}}, \href{http://dx.doi.org/10.1007/JHEP07(2016)115}{\emph{JHEP} {\bf 07} (2016) 115}, [\href{https://arxiv.org/abs/1604.05520}{{\tt 1604.05520}}].

\bibitem{Bonelli:2022ten}
G.~Bonelli, C.~Iossa, D.~P. Lichtig and A.~Tanzini, \emph{{Irregular Liouville correlators and connection formulae for Heun functions}},  \href{https://arxiv.org/abs/2201.04491}{{\tt 2201.04491}}.

\bibitem{Felder:2020ugk}
G.~Felder and G.~Rembado, \emph{{Singular modules for affine Lie algebras, and applications to irregular WZNW conformal blocks}}, \href{http://dx.doi.org/10.1007/s00029-022-00821-y}{\emph{Selecta Math.} {\bf 29} (2023) 15}, [\href{https://arxiv.org/abs/2012.14793}{{\tt 2012.14793}}].

\bibitem{Awata:2009ur}
H.~Awata and Y.~Yamada, \emph{{Five-dimensional AGT Conjecture and the Deformed Virasoro Algebra}}, \href{http://dx.doi.org/10.1007/JHEP01(2010)125}{\emph{JHEP} {\bf 01} (2010) 125}, [\href{https://arxiv.org/abs/0910.4431}{{\tt 0910.4431}}].

\bibitem{Frenkel:2015rda}
E.~Frenkel, S.~Gukov and J.~Teschner, \emph{{Surface Operators and Separation of Variables}}, \href{http://dx.doi.org/10.1007/JHEP01(2016)179}{\emph{JHEP} {\bf 01} (2016) 179}, [\href{https://arxiv.org/abs/1506.07508}{{\tt 1506.07508}}].

\bibitem{Nekrasov:2017gzb}
N.~Nekrasov, \emph{{BPS/CFT correspondence V: BPZ and KZ equations from qq-characters}},  \href{https://arxiv.org/abs/1711.11582}{{\tt 1711.11582}}.

\bibitem{Ribault:2005wp}
S.~Ribault and J.~Teschner, \emph{{H+(3)-WZNW correlators from Liouville theory}}, \href{http://dx.doi.org/10.1088/1126-6708/2005/06/014}{\emph{JHEP} {\bf 06} (2005) 014}, [\href{https://arxiv.org/abs/hep-th/0502048}{{\tt hep-th/0502048}}].

\bibitem{JIMBO1981306}
M.~Jimbo, T.~Miwa and K.~Ueno, \emph{Monodromy preserving deformation of linear ordinary differential equations with rational coefficients: I. general theory and tau-function}, \href{http://dx.doi.org/https://doi.org/10.1016/0167-2789(81)90013-0}{\emph{Physica D: Nonlinear Phenomena} {\bf 2} (1981) 306--352}.

\bibitem{Feigin_2010a}
B.~Feigin, E.~Frenkel and L.~Rybnikov, \emph{Opers with irregular singularity and spectra of the shift of argument subalgebra}, \href{http://dx.doi.org/10.1215/00127094-2010-057}{\emph{Duke Mathematical Journal} {\bf 155} (nov, 2010) }.

\bibitem{Feigin_2010b}
B.~Feigin, E.~Frenkel and V.~T. Laredo, \emph{Gaudin models with irregular singularities}, \href{http://dx.doi.org/10.1016/j.aim.2009.09.007}{\emph{Advances in Mathematics} {\bf 223} (feb, 2010) 873--948}.

\bibitem{XuQG}
X.~Xu, \emph{Representations of quantum groups arising from the stokes phenomenon and applications}, .

\bibitem{Haghighat:2023vzu}
B.~Haghighat, Y.~Liu and N.~Reshetikhin, \emph{{Flat Connections from Irregular Conformal Blocks}},  \href{https://arxiv.org/abs/2311.07960}{{\tt 2311.07960}}.

\bibitem{Alday:2009fs}
L.~F. Alday, D.~Gaiotto, S.~Gukov, Y.~Tachikawa and H.~Verlinde, \emph{{Loop and surface operators in N=2 gauge theory and Liouville modular geometry}}, \href{http://dx.doi.org/10.1007/JHEP01(2010)113}{\emph{JHEP} {\bf 01} (2010) 113}, [\href{https://arxiv.org/abs/0909.0945}{{\tt 0909.0945}}].

\bibitem{Cordova:2016cmu}
C.~Cordova and D.~L. Jafferis, \emph{{Toda Theory From Six Dimensions}}, \href{http://dx.doi.org/10.1007/JHEP12(2017)106}{\emph{JHEP} {\bf 12} (2017) 106}, [\href{https://arxiv.org/abs/1605.03997}{{\tt 1605.03997}}].

\bibitem{Gu:2023plq}
X.~Gu, B.~Haghighat and K.~Loo, \emph{{Irregular Fibonacci Conformal Blocks}},  \href{https://arxiv.org/abs/2311.13358}{{\tt 2311.13358}}.

\bibitem{Hu2016}
S.~Hu and P.~Liu, \emph{Kauffman polynomial from a generalized yang--yang function}, \href{http://dx.doi.org/10.1007/s00023-015-0426-9}{\emph{Annales Henri Poincar{\'e}} {\bf 17} (May, 2016) 1145--1179}.

\bibitem{Nekrasov:2010ka}
N.~Nekrasov and E.~Witten, \emph{{The Omega Deformation, Branes, Integrability, and Liouville Theory}}, \href{http://dx.doi.org/10.1007/JHEP09(2010)092}{\emph{JHEP} {\bf 09} (2010) 092}, [\href{https://arxiv.org/abs/1002.0888}{{\tt 1002.0888}}].

\bibitem{Harvey:1995tg}
J.~A. Harvey, G.~W. Moore and A.~Strominger, \emph{{Reducing S duality to T duality}}, \href{http://dx.doi.org/10.1103/PhysRevD.52.7161}{\emph{Phys. Rev. D} {\bf 52} (1995) 7161--7167}, [\href{https://arxiv.org/abs/hep-th/9501022}{{\tt hep-th/9501022}}].

\bibitem{Bershadsky:1995vm}
M.~Bershadsky, A.~Johansen, V.~Sadov and C.~Vafa, \emph{{Topological reduction of 4-d SYM to 2-d sigma models}}, \href{http://dx.doi.org/10.1016/0550-3213(95)00242-K}{\emph{Nucl. Phys. B} {\bf 448} (1995) 166--186}, [\href{https://arxiv.org/abs/hep-th/9501096}{{\tt hep-th/9501096}}].

\bibitem{Rozansky:1996bq}
L.~Rozansky and E.~Witten, \emph{{HyperKahler geometry and invariants of three manifolds}}, \href{http://dx.doi.org/10.1007/s000290050016}{\emph{Selecta Math.} {\bf 3} (1997) 401--458}, [\href{https://arxiv.org/abs/hep-th/9612216}{{\tt hep-th/9612216}}].

\bibitem{Dedushenko:2018bpp}
M.~Dedushenko, S.~Gukov, H.~Nakajima, D.~Pei and K.~Ye, \emph{{3d TQFTs from Argyres\textendash{}Douglas theories}}, \href{http://dx.doi.org/10.1088/1751-8121/abb481}{\emph{J. Phys. A} {\bf 53} (2020) 43LT01}, [\href{https://arxiv.org/abs/1809.04638}{{\tt 1809.04638}}].

\bibitem{Gukov:2024adb}
S.~Gukov, B.~Haghighat and N.~Reshetikhin, \emph{{Foams and KZ-equations in Rozansky-Witten theories}},  \href{https://arxiv.org/abs/2407.19757}{{\tt 2407.19757}}.

\end{thebibliography}\endgroup

\end{document}